\newtheorem{lemma}{Lemma}
\newtheorem{theorem}{Theorem}
\newtheorem{definition}{Definition}
\newtheorem{remark}{Remark}
\newcommand{\biggg}{\bBigg@{3}}
\newcommand{\Biggg}{\bBigg@{3.5}}
\newcommand{\bigggg}{\bBigg@{4}}
\newcommand{\Bigggg}{\bBigg@{4.5}}
\begin{document}
	\title{A General DoF and Pattern Analyzing Scheme for Electromagnetic Information Theory}
	\author{
		\IEEEauthorblockN{
			Zhongzhichao~Wan,
			Jieao~Zhu,
			Yongli~Yan,
			and~Linglong~Dai,~\IEEEmembership{Fellow,~IEEE}
		}
		\thanks{This work was supported in part by the National Natural Science Foundation for Distinguished Young Scholars (Grant No. 62325106), in part by the National Key Research and Development Program of China (Grant No. 2023YFB3811503), and in part by the National Natural Science Foundation of China (Grant No. 62031019).}
				\thanks{All authors are with the Department of Electronic Engineering, Tsinghua University, Beijing 100084, China (E-mails: \{wzzc20, zja21\}@mails.tsinghua.edu.cn; \{yanyongli, daill\}@tsinghua.edu.cn).} 
			
	}
	\maketitle

\begin{abstract}
	Electromagnetic information theory (EIT) is one of the emerging topics for 6G communication due to its potential to reveal the performance limit of wireless communication systems. For EIT, one of the most important research directions is degree of freedom (DoF) analysis. Existing research works on DoF analysis for EIT focus on asymptotic conclusions of DoF, which do not well fit the practical wireless communication systems with finite spatial regions and finite frequency bandwidth. In this paper, we use the theoretical analyzing tools from Slepian concentration problem and extend them to three-dimensional space domain and four-dimensional space-time domain under electromagnetic constraints. Then we provide asymptotic DoF conclusions and non-asymptotic DoF analyzing scheme, which suits practical scenarios better, under different scenarios like three-dimensional antenna array. Moreover, we theoretically prove that the channel DoF is upper bounded by the proposed DoF of electromagnetic fields. Finally, we use numerical analysis to provide some insights about the optimal spatial sampling interval of the antenna array, the DoF of three-dimensional antenna array, the impact of unequal antenna spacing, the orthogonal space-time patterns, etc.
\end{abstract}

\begin{IEEEkeywords}
	Electromagnetic information theory (EIT), Slepian concentration problem, operator, degree of freedom (DoF), spatial sampling.
\end{IEEEkeywords}

\section{Introduction}

Future sixth-generation (6G) wireless communication systems are expected to support various of emerging applications with considerable potential, e.g., autonomous vehicles, urban air mobility, and extended reality \cite{na2024operator}.
To meet the demand of high spectral density for 6G wireless communication systems, many promising technologies, including reconfigurable intelligent surfaces (RISs) \cite{basar2019wireless,wang2022location}, continuous-aperture multiple-input multiple-output (CAP-MIMO) \cite{huang2020holographic,zhang2023pattern}, and near-field communications \cite{cui2022near,wu2023multiple}, have been recently investigated. All of these technologies try to explore new sources of degrees of freedom (DoF) or capacity gain for the required performance improvement of 6G. The possible performance improvement actually comes from more accurate understanding and precise manipulation of electromagnetic fields which convey information \cite{chafii2023twelve}. Therefore, combining classical electromagnetic theory and information theory to provide modeling and capacity analysis tools is of great importance for exploring the fundamental performance limit of wireless communication systems, which leads to the interdisciplinary subject called electromagnetic information theory (EIT) \cite{migliore2018horse}. By integrating deterministic physical theory and stochastic mathematical theory \cite{zhu2022electromagnetic}, EIT is expected to provide new insights into system models, DoF, capacity limits, etc., from the electromagnetic perspective. 

The existing research directions of EIT includes channel modeling schemes \cite{gong2023holographic,wei2023tri,pizzo2023wide}, DoF analysis \cite{bucci1987spatial, bucci1989degrees,franceschetti2015landau}, mutual information and capacity analysis \cite{jensen2008capacity,jeon2017capacity,wan2022mutual}, etc.
Among these directions, DoF analysis is one of the most important parts, since it provides the upper bound of usable data streams and shows the required number of antennas, which guides the design of practical antenna arrays. Therefore, there is a fundamental question in EIT that how many DoFs can be provided by a wireless communication system constrained by electromagnetism? Moreover, how can we utilize these DoFs to transfer information? For the analysis of DoFs in EIT, one approach is considering bandwidth in the wavenumber domain.  The bandwidth above is called spatial bandwidth in \cite{bucci1987spatial}, which is similar to the widely-used bandwidth in time-frequency domain \cite{slepian1976bandwidth}. Identical to the time sampling scheme determined by classical bandwidth in the frequency domain, the bandwidth in wavenumber domain determines the sampling scheme in the space domain, which corresponds to the spatial DoF of the electromagnetic field. The spatial bandwidth of scattered fields was rigorously derived in \cite{bucci1987spatial}. It was shown that for a time-harmonic model with scatterers bounded in a sphere, the spatial bandwidth of the electromagnetic fields on an infinitely long observation line is linear with the frequency and the radius of the sphere. From the spatial bandwidth, the DoF obtained on a one-dimensional observation line is analyzed in \cite{bucci1989degrees} based on the conclusions of Slepian concentration problem and prolate spheroidal wave functions. For arbitrary received fields on two-dimensional surface, half-wavelength sampling scheme in the spatial domain was obtained in \cite{balanis2015antenna} when evanescent waves were discarded, which provided insightful results about the bound of DoF from direct spatial sampling. However, the spatial sampling requires infinite observation regions in the spatial domain, which is not consistent with practical wireless communication systems with finite spatial regions. 

For the DoF analysis considering finite spatial observation region, another approach follows Landau's eigenvalue theorem, which provides asymptotic DoF conclusions \cite{landau1980eigenvalue}. Here the asymptotic conclusion is achieved when the product of the Lebesgue measure of constraint regions in space and wavenumber domain tends to infinity, which implies either sufficiently large transceivers or sufficiently large frequency bandwidth. Following Landau's eigenvalue theorem, the DoF and spatial sampling scheme were discussed for two-dimensional antenna array \cite{pizzo2022nyquist}, which was based on the time-harmonic assumption and ignored the impact of the time domain. 
To further include the time domain in the wireless communication system, a model considering both space and time domain is used in \cite{franceschetti2015landau}, which shows the asymptotic DoF observed along a spatial cut-set boundary that separates transmitters and receivers. Although the existing works have already proposed many useful conclusions about the asymptotic DoF, there are still some important questions that remain to be answered. First of all, the non-asymptotic case which does not require sufficiently large transceivers or sufficiently large frequency bandwidth has not been discussed yet. Since practical wireless communication systems suit non-asymptotic cases better than asymptotic cases, discussing the DoF under non-asymptotic cases is of importance. Secondly, the DoF discussed here is the functional DoF of the transmitted/received electromagnetic fields, which is the dimension of the space constructed by all possible electromagnetic fields at one side of the transceiver\cite{zhu2022electromagnetic}. For practical wireless communication systems, we care more about the channel DoF which is determined by both transceivers and also the channel conditions. The relationship between the functional DoF and channel DoF remains to be discussed. 
Moreover, how to utilize the DoF to convey information needs to be answered, which corresponds to pattern design for the electromagnetic fields at the transceivers. 

To answer the above questions and provide a general DoF and pattern analyzing scheme for EIT, in this paper, we begin from the classical Slepian concentration problem which analyzes the model with limited time observation region and frequency bandwidth. We extend it to the three-dimensional space domain and four-dimensional space-time domain to obtain the DoF of the electromagnetic fields with given physical constraints\footnote{Simulation codes will be provided to reproduce the results in this paper: \url{http://oa.ee.tsinghua.edu.cn/dailinglong/publications/publications.html}.}.  Specifically, the contributions of this paper are summarized as follows:
\subsection{Contribution}
\begin{itemize}
	\item{We use the theoretical analyzing tools from Slepian concentration problem and extend it to three-dimensional space domain and four-dimensional space-time domain under electromagnetic constraints. Then we analyze the concentration problem involving time, frequency, space, and wavenumber domain for electromagnetic fields. From the analysis we can provide asymptotic DoF conclusions and non-asymptotic DoF analyzing schemes, which suit practical scenarios better, under different scenarios like three-dimensional antenna arrays. }
	\item{We provide rigorous definitions of the functional DoF analyzed above and channel DoF for continuous electromagnetic fields. Then, we provide theoretical proof about the relationship between the functional DoF and the channel DoF. It is shown that the channel DoF is upper bounded by the functional DoF at the transceivers.} 
	\item{We perform numerical simulations to show the non-asymptotic DoF under different scenarios, including three-dimensional antenna array and space-time cooperative information transmission model. The simulation results provide some insights about the optimal spatial sampling interval of antenna array, the DoF of three-dimensional antenna array, the impact of unequal antenna spacing, the orthogonal space-time patterns, etc. For example, our simulation results show that with narrow frequency bandwidth, half-wavelength antenna spacing is enough but not necessary to achieve the DoF upperbound.}	
\end{itemize}

\subsection{Notation}
Bold uppercase characters denote matrices;
bold lowercase characters denote vectors;
the dot $\cdot$ denotes the scalar product of two vectors, or the matrix-vector multiplication; 
$c$ is the speed of light in a vacuum; 
$\mathscr{F}[f(x)]$ denotes the Fourier transform of $f(x)$; 
$J_m(x)$ is the $m_{\rm th}$ order Bessel function of the first kind; 
The inner product of $f: \mathbb{R}^n \rightarrow \mathbb{C}$ and $f_i: \mathbb{R}^n \rightarrow \mathbb{C}$ on the domain $\mathcal{A}$ is denoted by $\langle f \ket{f_i}:= \int_{\mathcal{A}} f({\bf x}) f_i^*({\bf x}) {\rm d}^n{\bf x}$. The norm of $f$ is denoted by $\left\| f \right\|:=(\langle f \ket{f})^{1/2} $. 
	
	\section{DoF analysis based on Slepian concentration problem}
	\label{sec_theoretical_dof}
	In this part we will discuss the DoF of electromagnetic fields using the theoretical tools from Slepian concentration problem. 
	Slepian concentration problem considers the optimal representation of a signal bandlimited in the frequency domain and approximately bandlimited in time domain by using a set of orthogonal bases. The number of the base functions that need to be used to approximate the signal with a given error threshold represents the DoF of the signal space that satisfies the above conditions. We will first introduce the existing mathematical forms of Slepian concentration problem in one and two-dimensional cases and then extend them to the three-dimensional space domain and four-dimensional space-time domain under electromagnetic constraints. 
	
	\subsection{Slepian concentration problem in one and two-dimensional domain}
	\label{subsec_1d2d_Selpian}
	For typical one-dimensional time and frequency domain, the Slepian concentration problem is formulated in the following part.
	We have the one-dimensional Fourier transform and inverse Fourier transform
	\begin{equation}
		\begin{aligned}
			f(t) = (2\pi)^{-1}\int_{-\infty}^{+\infty}F(\omega)e^{{\rm j}\omega t}{\rm d}\omega,
		\end{aligned}
	\end{equation}
	and
	\begin{equation}
		\begin{aligned}
			F(\omega) = \int_{-\infty}^{+\infty}f(t)e^{-{\rm j}\omega t}{\rm d}t.
		\end{aligned}
	\end{equation}
	Then for a bandlimited signal $G(\omega) = \int_{-T}^{+T}g(t)e^{-{\rm j}\omega t}{\rm d}t$, the optimally concentrated signal is taken to be the one that maximizes $\lambda = \frac{\int_{-\Omega}^{\Omega}|G(\omega)|^2{\rm d}\omega}{\int_{-\infty}^{\infty}|G(\omega)|^2{\rm d}\omega}$, which leads to the following integral equation
	\begin{equation}
	\begin{aligned}
		\int_{-T}^{T}D(t,t')g(t'){\rm d}t' = \lambda g(t),
	\end{aligned}
	\end{equation}
	and
	\begin{equation}
		\begin{aligned}
			D(t,t') = (2\pi)^{-1}\int_{-\Omega}^\Omega e^{{\rm j}\omega(t-t')}{\rm d}\omega = \frac{\sin(\Omega(t-t'))}{\pi(t-t')}.
		\end{aligned}
	\end{equation}
	By analyzing the asymptotic behavior of the eigenvalue $\lambda$, it is known that the dimension of the signal space is asymptotically
	\begin{equation}
		\begin{aligned}
			N^{1D} = \sum_{i=1}^\infty \lambda_i = \int_{-T}^{T}D(t,t){\rm d}t = \frac{2\Omega T}{\pi},
		\end{aligned}
	\end{equation}
	when $\Omega T \rightarrow \infty$ \cite{slepian1961prolate}. We show in Fig. \ref{1d_omegaT} that when $\Omega T \rightarrow \infty$, the decay line of eigenvalues will have a cut-off transition. Nearly $N^{1D}$ eigenvalues are on the left side of the transition line, which tend to the same maximum value. Moreover, nearly all eigenvalues on the right side of the transition line tends to 0. Therefore, $N^{1D}$ is the asymptotic DoF of the signal space when $\Omega T \rightarrow \infty$.   
	
	\begin{figure}
		\centering 
		\includegraphics[width=0.7\textwidth]{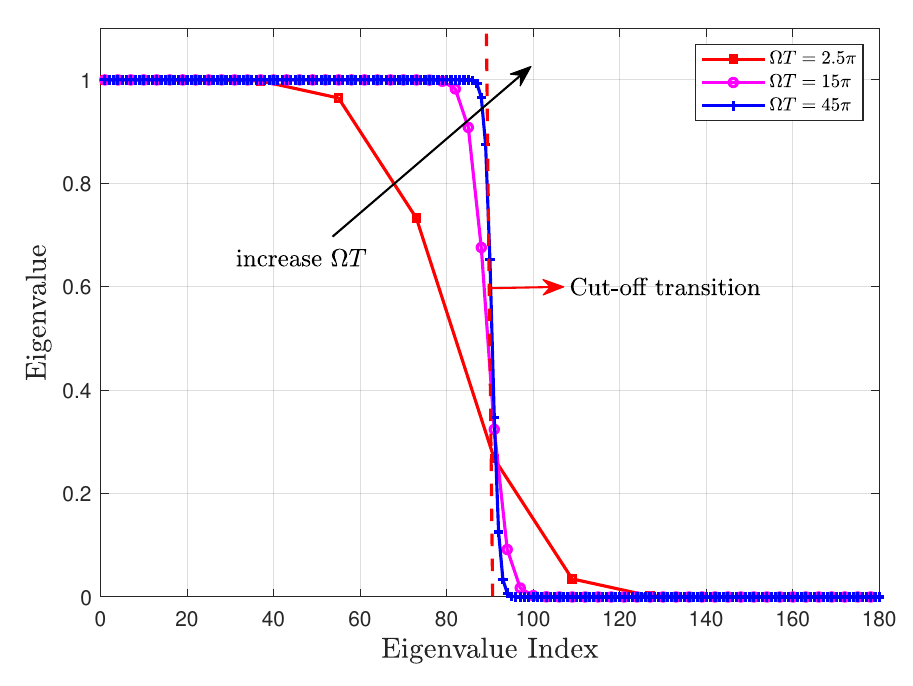} 
		\caption{The eigenvalues of one-dimensional Slepian concentration problem when $\Omega T$ increases.} 
		\label{1d_omegaT}
	\end{figure}
	
	For two-dimensional case, similar problem is formulated in the literature as follows:
	\begin{equation}
		\begin{aligned}
			f({\bf x})=(2\pi)^{-2}\int_{-\infty}^{+\infty}F({\bf k})e^{{\rm j}{\bf k}\cdot {\bf x}}{\rm d}^2{\bf k},
		\end{aligned}
		\label{eq_2d_fourier}
	\end{equation}
	and
	\begin{equation}
		\begin{aligned}
			F({\bf k}) = \int_{-\infty}^{+\infty}f({\bf x})e^{-{\rm j}{\bf k}\cdot {\bf x}}{\rm d}^2{\bf x}.
		\end{aligned}
	\end{equation}
	The integral equation corresponding to the two-dimensional eigenfunctions and eigenvalues is 
	\begin{equation}
		\begin{aligned}
			\int_{\mathcal{R}}D({\bf x},{\bf x}')f({\bf x}'){\rm d}^2{\bf x}' = \lambda f({\bf x}),
		\end{aligned}
	\end{equation}
	and
	\begin{equation}
		\begin{aligned}
			D({\bf x},{\bf x}') &= (2\pi)^{-2}\int_{\mathcal{K}} e^{{\rm j}{\bf k}\cdot({\bf x}-{\bf x}')}{\rm d}^3{\bf k} = \frac{K}{2\pi} \frac{J_1(K\left\| {\bf x}-{\bf x}' \right\|)}{\left\| {\bf x}-{\bf x}' \right\|},
		\end{aligned}
	\end{equation}
	where $\mathcal{K} = \{{\bf k} : \left\| {\bf k} \right\|\leqslant K\}$ is disk-shaped \cite{beylkin2007grids}. 
	The DoF when $K^2A \rightarrow \infty$ is 
	\begin{equation}
		\begin{aligned}
			N^{2D} = \sum_{i=1}^\infty \lambda_i = \int_{\mathcal{R}}D({\bf x},{\bf x}){\rm d}^2{\bf x} = \frac{K^2A}{4\pi},
		\end{aligned}
		\label{eq_2d_dof}
	\end{equation}
	where $A$ is the area of $\mathcal{R}$. 
	
	\subsection{Slepian concentration problem in three-dimensional domain}
	\label{subsec_3d_slepian}
	
	Now we will analyze the case with three-dimensional domain, which depicts the electromagnetic fields in the space. We consider the constraints of electromagnetic fields in space and wavenumber domain. In space domain the electromagnetic field is observed within the region of transceivers. In wavenumber domain the electromagnetic field is constrained by the frequency settings according to dispersion relation. There are two different cases of the constraints in wavenumber domain. The first one is with a frequency bandwidth, which corresponds to the constraint region as a ball in the wavenumber domain ($k_x^2+k_y^2+k_z^2 \leqslant k_0^2$) or a spherical shell in the wavenumber domain ($k_1^2 \leqslant k_x^2+k_y^2+k_z^2 \leqslant k_0^2$). The second one is on a single frequency point, which corresponds to the constraint region as a sphere in the wavenumber domain ($k_x^2+k_y^2+k_z^2=k_0^2$).  For the first case similar to what has been discussed in one-dimensional and two-dimensional examples, we have 
	\begin{equation}
		\begin{aligned}
			f({\bf x})=(2\pi)^{-3}\int_{-\infty}^{+\infty}F({\bf k})e^{{\rm j}{\bf k}\cdot {\bf x}}{\rm d}^3{\bf k},
		\end{aligned}
	\end{equation}
	and
	\begin{equation}
		\begin{aligned}
			F({\bf k}) = \int_{-\infty}^{+\infty}f({\bf x})e^{-{\rm j}{\bf k}\cdot {\bf x}}{\rm d}^3{\bf x}.
		\end{aligned}
	\end{equation}
	If the wavenumber domain constraint satisfies $k_x^2+k_y^2+k_z^2 \leqslant k_0^2$, we have 
	\begin{equation}
		\begin{aligned}
			D({\bf x},{\bf x}') &= (2\pi)^{-3}\int_{\mathcal{K}} e^{{\rm j}{\bf k}\cdot({\bf x}-{\bf x}')}{\rm d}^3{\bf k} \\&= (2\pi)^{-3} \int_{0}^{k_0} \int_{{\Omega}=\{\hat{\bf k}: \left\| \hat{\bf k} \right\|=1\}} e^{{\rm j}k\hat{\bf k}\cdot({\bf x}-{\bf x}')}k^2{\rm d}^2S{\rm d}k
			\\& = (2\pi)^{-3} 4\pi \int_0^{k_0} \frac{\sin k \left\| {\bf x}-{\bf x}' \right\|}{k \left\| {\bf x}-{\bf x}' \right\|}k^2 {\rm d}k
			\\& = \frac{1}{2\pi^2} \left( \frac{\sin k\left\| {\bf x}-{\bf x}' \right\|}{\left\| {\bf x}-{\bf x}' \right\|^3}-\frac{k\cos k\left\| {\bf x}-{\bf x}' \right\|}{\left\| {\bf x}-{\bf x}' \right\|^2} \right) \Bigg|_0^{k_0}
			\\& = \frac{1}{2\pi^2} \left( \frac{\sin k_0\left\| {\bf x}-{\bf x}' \right\|}{\left\| {\bf x}-{\bf x}' \right\|^3}-\frac{k_0\cos k_0\left\| {\bf x}-{\bf x}' \right\|}{\left\| {\bf x}-{\bf x}' \right\|^2} \right).
		\end{aligned}
		\label{equ_ball}
	\end{equation}
	The asymptotic DoF is 
	\begin{equation}
		\begin{aligned}
			N^{3D} = \sum_{i=1}^\infty \lambda_i = \int_{\mathcal{R}}D({\bf x},{\bf x}){\rm d}^3{\bf x} = \frac{V}{6\pi^2}k_0^3.
		\end{aligned}
	\end{equation}
	
	If the wavenumber domain constraint satisfies $k_1^2\leqslant k_x^2+k_y^2+k_z^2 \leqslant k_0^2$, similarly we have
	\begin{equation}
		\begin{aligned}
			D({\bf x},{\bf x}') =& \frac{1}{2\pi^2} \left( \frac{\sin k_0\left\| {\bf x}-{\bf x}' \right\|}{\left\| {\bf x}-{\bf x}' \right\|^3}-\frac{k_0\cos k_0\left\| {\bf x}-{\bf x}' \right\|}{\left\| {\bf x}-{\bf x}' \right\|^2} \right)
	-\frac{1}{2\pi^2} \left( \frac{\sin k_1\left\| {\bf x}-{\bf x}' \right\|}{\left\| {\bf x}-{\bf x}' \right\|^3}-\frac{k_1\cos k_1\left\| {\bf x}-{\bf x}' \right\|}{\left\| {\bf x}-{\bf x}' \right\|^2} \right).
		\end{aligned}
		\label{equ_spherical_shell}
	\end{equation}
	The asymptotic DoF is 
	\begin{equation}
	\begin{aligned}
		N^{3D} = \sum_{i=1}^\infty \lambda_i = \int_{\mathcal{R}}D({\bf x},{\bf x}){\rm d}^3{\bf x} = \frac{V}{6\pi^2}(k_0^3-k_1^3).
	\end{aligned}
	\end{equation}
	\begin{remark}
		Here the asymptotic DoF is achieved when $V(k_0^3-k_1^3)$ tends to infinity. For the single frequency point where $k_0=k_1$, the asymptotic DoF becomes 0, which seems no longer correct. We will do further discussion on it in the following part. 
		\end{remark}
	
%
%
	
	For the single frequency point case, we will use a function which converges to delta function in the $k$ axis to approach the sphere surface in wavenumber domain. We set $F({\bf k}) =F_1(\theta,\phi)F_2(k)$, where $\int_{0}^\infty F_2(k) {\rm d}k = 1$, then we have
	
	\begin{equation}
		\begin{aligned}
			f({\bf x})=&(2\pi)^{-3}\int_{{\Omega}=\{\hat{\bf k}: \left\| \hat{\bf k} \right\|=1\}}\int_{0}^{\infty}F_2(k)k^2F_1(\theta,\phi)e^{{\rm j}k\hat{\bf k}\cdot {\bf x}}{\rm d}k{\rm d}^2S,
		\end{aligned}
	\end{equation}
	and
	\begin{equation}
		\begin{aligned}
			F_1(\theta,\phi) &= 
			\frac{1}{F_2(k)} \int_{-\infty}^{+\infty}f({\bf x})e^{-{\rm j}{\bf k}\cdot {\bf x}}{\rm d}^3{\bf x}.
		\end{aligned}
	\end{equation}
	Then the kernel is represented by 
	\begin{equation}
		\begin{aligned}
			D({\bf x},{\bf x}') &= (2\pi)^{-3}\int_{{\Omega}=\{\hat{\bf k}: \left\| \hat{\bf k} \right\|=1\}}\int_{0}^{\infty}F_2(k)k^2e^{{\rm j}k\hat{\bf k}\cdot {\bf x}}{\rm d}k{\rm d}^2S.
		\end{aligned}
	\end{equation}
	The eigenvalues comes from the following integral equation
	\begin{equation}
		\lambda_i f_i({\bf x}) = \int_{\mathcal{R}}D({\bf x},{\bf x}')f_i({\bf x}'){\rm d}^3{\bf x}'.
		\label{eigenfunction_3d}
	\end{equation}
	Here we set $\{F_{2,n}(k)\}_{n=1}^{+\infty} = \frac{1}{2h}\left( 1-\frac{|k-k_0|}{2h} \right), |k-k_0|\leqslant 2h$, $h = \frac{1}{2^n}$ to be a set of functions that make $F({\bf k})$ focus on a sphere surface in the wavenumber domain. When $n \rightarrow \infty$, $F_2(k) \rightarrow \delta(k-k_0)$. To be clearer, we have
	 \begin{equation}
	 	\begin{aligned}
	 		&\int \frac{1}{2h}\left( 1-\frac{|k-k_0|}{2h} \right)k^2e^{{\rm j}k\hat{\bf k}\cdot {\bf x}}{\rm d}k
	 		= \left\{\begin{matrix}
	 		g_1(k)	& k \in [k_0,k_0+2h),\\
	 			g_2(k)	& k \in (k_0-2h,k_0),
	 		\end{matrix}\right.
	 	\end{aligned}
	 \end{equation}
	 where
	 \begin{equation}
	 	\begin{aligned}
	 		g_1(k) =& -\frac{1}{4 h^2 (\hat{\bf k}\cdot {\bf x})^4}e^{{\rm j} (\hat{\bf k}\cdot {\bf x}) k} \Bigg(2 {\rm j} h (\hat{\bf k}\cdot {\bf x}) \Big(k^2 (\hat{\bf k}\cdot {\bf x})^2+2 {\rm j} k (\hat{\bf k}\cdot {\bf x})-2\Big)+{\rm j} k^3 (\hat{\bf k}\cdot {\bf x})^2 (k_0-k)\\&+(\hat{\bf k}\cdot {\bf x})^2 k (3 k-2 k_0)+2 {\rm j} (\hat{\bf k}\cdot {\bf x}) (3 k-k_0)-6\Bigg),
		 \end{aligned}
     \end{equation}
     and
    \begin{equation}
    	\begin{aligned}
    		g_2(k) =& \frac{1}{4 h^2 (\hat{\bf k}\cdot {\bf x})^4}e^{{\rm j} (\hat{\bf k}\cdot {\bf x}) k} \Bigg(-2 {\rm j} h (\hat{\bf k}\cdot {\bf x}) \Big(k^2 (\hat{\bf k}\cdot {\bf x})^2+2 {\rm j} k (\hat{\bf k}\cdot {\bf x})-2\Big)+{\rm j} k^3 (\hat{\bf k}\cdot {\bf x})^2 (k_0-k)\\&+(\hat{\bf k}\cdot {\bf x})^2 k (3 k-2 k_0)+2 {\rm j} (\hat{\bf k}\cdot {\bf x}) (3 k-k_0)-6\Bigg).
    	\end{aligned}
    \end{equation} 
    Therefore, we know that 
    \begin{equation}
    	\begin{aligned}
    		\int_{k_0-2h}^{k_0+2h} \frac{1}{2h}\left( 1-\frac{|k-k_0|}{2h} \right)k^2e^{{\rm j}k\hat{\bf k}\cdot {\bf x}}{\rm d}k
    		=& -\frac{1}{4 h^2 (\hat{\bf k}\cdot {\bf x})^4}e^{{\rm j} (\hat{\bf k}\cdot {\bf x}) k_0-2 {\rm j} h (\hat{\bf k}\cdot {\bf x})} 
    		\Bigg(e^{4 {\rm j} h (\hat{\bf k}\cdot {\bf x})} \Big(4 h^2 (\hat{\bf k}\cdot {\bf x})^2+\\&4 h (\hat{\bf k}\cdot {\bf x}) ((\hat{\bf k}\cdot {\bf x}) k_0+2 {\rm j})+(\hat{\bf k}\cdot {\bf x})^2 k_0^2+4 {\rm j} (\hat{\bf k}\cdot {\bf x}) k_0-6\Big)
    		\\&+4 h^2 (\hat{\bf k}\cdot {\bf x})^2-2 e^{2 {\rm j} h (\hat{\bf k}\cdot {\bf x})} \Big((\hat{\bf k}\cdot {\bf x})^2 k_0^2+4 {\rm j} (\hat{\bf k}\cdot {\bf x}) k_0-6\Big)\\&-4 h (\hat{\bf k}\cdot {\bf x}) ((\hat{\bf k}\cdot {\bf x}) k_0+2 {\rm j})+(\hat{\bf k}\cdot {\bf x})^2 k_0^2+4 {\rm j} (\hat{\bf k}\cdot {\bf x}) k_0-6\Bigg).
    	\end{aligned}
    \end{equation}
    When we let $h\rightarrow 0$, $\int_{k_0-2h}^{k_0+2h} \frac{1}{2h}\left( 1-\frac{|k-k_0|}{2h} \right)k^2e^{{\rm j}k\hat{\bf k}\cdot {\bf x}}{\rm d}k \rightarrow k_0^2 e^{{\rm j}k_0 \hat{\bf k}\cdot {\bf x}}$, which makes the kernel function as 
    \begin{equation}
    	\begin{aligned}
    		D({\bf x},{\bf x}') =&  \frac{1}{2\pi^2} \frac{k_0\sin k_0 \left\| {\bf x}-{\bf x}' \right\|}{\left\|  {\bf x}-{\bf x}' \right\|}.
    	\end{aligned}
    \end{equation}
    Then the integral equation transforms to
    	\begin{equation}
    	\lambda_i f_i({\bf x}) = \int_{\mathcal{R}} \frac{1}{2\pi^2} \frac{k_0\sin k_0 \left\| {\bf x}-{\bf x}' \right\|}{\left\|  {\bf x}-{\bf x}' \right\|}f_i({\bf x}'){\rm d}^3{\bf x}'.
    	\label{equ_sphere_surface}
    \end{equation}
    By solving this integral equation we obtain the eigenvalues that determine the number of base functions required to construct the received electric field, which can be viewed as the dimension of the subspace constructed by the possible received electric field. To be more specific, the eigenvalues determine the "importance" of the eigenfunctions as bases of the electromagnetic fields which satisfy the concentration conditions. From the Slepian concentration problem we have a set of orthogonal eigenfunctions $f_i$ and eigenvalues $\lambda_i$ for functions band-limited in wavenumber domain and approximately band-limited in the space domain. If we use these functions to approximate arbitrary function $f$ band-limited in wavenumber domain, we have 
		\begin{equation}
	\begin{aligned}
		\left\| f-\sum_{i=1}^n a_i f_i  \right\|^2 &= 
		\langle f-\sum_{i=1}^n a_i f_i \ket{f-\sum_{i=1}^n a_i f_i}
		\\& = \langle f \ket{f} - \sum_{i=1}^n a_i^* \langle f \ket{f_i} -\sum_{i=1}^n a_i \langle f_i \ket{f} + \sum_{i=1}^n |a_i|^2 \langle f_i \ket{f_i}
		\\& = \left\| f \right\|^2 + \sum_{i=1}^n \left| a_i \left\| f_i \right\| - \frac{ \langle f \ket{f_i}}{\left\| f_i \right\|} \right|^2 - \sum_{i=1}^n \frac{ \langle f \ket{f_i}^2}{\left\| f_i \right\|^2}
		\\&\geqslant  \left\| f \right\|^2 - \sum_{i=1}^n \frac{ \langle f \ket{f_i}^2}{\left\| f_i \right\|^2}
		\\&=\sum_{i=n+1}^{\infty} b_i^2 \lambda_i ,
	\end{aligned}
	\label{eqn_min_f_approx}
	\end{equation}
	where the inner product is defined as $\langle f \ket{f_i}:= \int_{\mathcal{R}} f({\bf x}) f_i^*({\bf x}) {\rm d}^3{\bf x}$, $\mathcal{R}$ is the concentration region in the space domain, $b_i = \frac{ \langle f \ket{f_i}}{\left\| f_i \right\|^2}= \frac{1}{\lambda_i} \langle f \ket{f_i}$, the equality is achieved when $a_i=b_i$. Then it is obvious that by using the eigenfunctions with large eigenvalues as bases, we can approximate arbitrary concentrated function with tolerable error. Although the space constructed by all concentrated functions has infinite dimensions, we only care about part of them. The eigenfunctions correspond to the dimensions we care about can be used as waveform patterns to construct any required electromagnetic field in wireless communication process.       
   
	In the above part we are analyzing the scenario where the direction of ${\bf k}$ is not restricted, which means that the incident wave to the receiver may have arbitrary direction, corresponding to general channel settings. Given a specific channel, the direction of the incident wave to the receiver may be constrained, which adds further restrictions on ${\bf k}$. Specifically, new constraints ${\Omega}_1 \subset {\Omega}=\{\hat{\bf k}: \left\| \hat{\bf k} \right\|=1\}$ on $\hat{\bf k}$ can be introduced, which is determined by the concentration region of the electromagnetic field caused by the scattering channel. Then we can have the following transform
	\begin{equation}
		\begin{aligned}
			f({\bf x})=(2\pi)^{-3}\int_0^{k_0} \int_{{\Omega}_1 \subset {\Omega}=\{\hat{\bf k}: \left\| \hat{\bf k} \right\|=1\}} F({\bf k})e^{{\rm j}{\bf k}\cdot {\bf x}}{\rm d}^2S {\rm d}{k},
		\end{aligned}
		\label{eqn_wavenumber_constraint1}
	\end{equation}
	or
	\begin{equation}
		\begin{aligned}
			f({\bf x})=(2\pi)^{-3}\int_{k_1}^{k_0} \int_{{\Omega}_1 \subset {\Omega}=\{\hat{\bf k}: \left\| \hat{\bf k} \right\|=1\}} F({\bf k})e^{{\rm j}{\bf k}\cdot {\bf x}}{\rm d}^2S {\rm d}{k}.
		\end{aligned}
		\label{eqn_wavenumber_constraint2}
	\end{equation}
	In single-frequency scenario, it corresponds to
	\begin{equation}
		\begin{aligned}
			f({\bf x})=(2\pi)^{-3}\int_{{\Omega}_1 \subset {\Omega}=\{\hat{\bf k}: \left\| \hat{\bf k} \right\|=1\}}k_0^2F_1(\theta,\phi)e^{{\rm j}k_0\hat{\bf k}\cdot {\bf x}}{\rm d}^2S.
		\end{aligned}
		\label{eqn_wavenumber_constraint3}
	\end{equation}

Since ${\Omega}_1$ may have complicated shapes, which makes the aforementioned scheme that integrating from wavenumber domain hard to analyze, we need to find another way to solve the concentration problem. It is shown that the solving the integral equation of the concentration problem from either of the two domains that are Fourier and inverse Fourier transforms of each other yields consistent eigenvalues\cite{beylkin2007grids}. Therefore, if the concentration region in the space domain has regular shapes, we can obtain the kernel function by integrating from the spatial domain, which is easier. Assume that the spatial concentration region is a ball which satisfies $\left\| {\bf x} \right\|\leqslant r_0$, we can obtain the kernel function as
\begin{equation}
	\begin{aligned}
		D(\hat{\bf k},\hat{\bf k}') &= (2\pi)^{-3}\int_{\mathcal{K}} e^{{\rm j}k{\bf x}\cdot(\hat{\bf k}-\hat{\bf k}')}{\rm d}^3{\bf x}\Bigg|_{k=k_0} 
		\\&= \frac{1}{2\pi^2} \left( \frac{\sin r_0k_0\left\| \hat{\bf k}-\hat{\bf k}' \right\|}{k_0^3\left\| \hat{\bf k}-\hat{\bf k}' \right\|^3}-\frac{r_0\cos r_0k_0\left\| \hat{\bf k}-\hat{\bf k}' \right\|}{k_0^2\left\| \hat{\bf k}-\hat{\bf k}' \right\|^2} \right).
	\end{aligned}
\end{equation}
If the spatial concentration region is a cuboid, which satisfies $\left| x_x \right|\leqslant a_x$, $\left| x_y \right|\leqslant a_y$ and $\left| x_z \right|\leqslant a_z$, we can obtain the kernel function as
\begin{equation}
	\begin{aligned}
		D(\hat{\bf k},\hat{\bf k}') &= (2\pi)^{-3}\int_{-a_x}^{a_x}\int_{-a_y}^{a_y}\int_{-a_z}^{a_z} e^{{\rm j}k{\bf x}\cdot(\hat{\bf k}-\hat{\bf k}')}{\rm d}^3{\bf x}\Bigg|_{k=k_0}
		\\& = \frac{\sin(a_x(k_x-k_x'))}{\pi(k_x-k_x')}\frac{\sin(a_y(k_y-k_y'))}{\pi(k_y-k_y')}\frac{\sin(a_z(k_z-k_z'))}{\pi(k_z-k_z')}\Bigg|_{k=k_0}.
	\end{aligned}
\end{equation}
Then, corresponding to \eqref{eqn_wavenumber_constraint1}, the concentrated functions satisfy the following integral equation
\begin{equation}
	\lambda_i \phi_i({\bf k}) = \int_0^{k_0}\int_{{\Omega}_1 \subset {\Omega}=\{\hat{\bf k}': \left\| \hat{\bf k}' \right\|=1\}} D({\bf k},{\bf k}')\phi_i({\bf k}'){\rm d}^3{\bf k}'.
	\label{eqn_ball_wavenumber_constraint}
\end{equation}
Corresponding to \eqref{eqn_wavenumber_constraint2}, the concentrated functions satisfy the following integral equation
\begin{equation}
	\lambda_i \phi_i({\bf k}) = \int_{k_1}^{k_0}\int_{{\Omega}_1 \subset {\Omega}=\{\hat{\bf k}': \left\| \hat{\bf k}' \right\|=1\}} D({\bf k},{\bf k}')\phi_i({\bf k}'){\rm d}^3{\bf k}'.
\end{equation}
Corresponding to \eqref{eqn_wavenumber_constraint3}, the concentrated functions satisfy the following integral equation
\begin{equation}
	\lambda_i \phi_i(\hat{\bf k}) = \int_{{\Omega}_1 \subset {\Omega}=\{\hat{\bf k}': \left\| \hat{\bf k}' \right\|=1\}} D(\hat{\bf k},\hat{\bf k}')\phi_i(\hat{\bf k}'){\rm d}^2\hat{\bf k}'.
\end{equation}
	
%
	
	\subsection{Slepian's concentration problem in four-dimensional space-time domain}
	
	After discussing the theoretical analysis in the three-dimensional domain, we now go further into the four-dimensional space-time domain, which fully characterizes the electromagnetic fields. Moreover, only with the change in the time domain can we transmit information in practical wireless communication systems.
	
	For ease of analysis, we introduce the following theorem from \cite{ihara1993information}.
	
	\begin{theorem}
		\cite{ihara1993information} Let $\mathcal{H}$ be a Hilbert space. Let $\mathcal{H}_1$ and $\mathcal{H}_2$ be two subspaces of $\mathcal{H}$. We have $\Pi_1$ and $\Pi_2$ be two projection operators that satisfy $\Pi_1 x \in \mathcal{H}_1$ and $\Pi_2 x \in \mathcal{H}_2$ for $x \in \mathcal{H}$. Let $A = \Pi_1 \Pi_2 \Pi_1$ and $B = \Pi_2 \Pi_1 \Pi_2$. Then, there exist complete orthonormal bases $\{\phi_n \}_{n=1}^{\infty}$ for $\mathcal{H}_1$ and $\{\psi_n \}_{n=1}^{\infty}$ for $\mathcal{H}_2$ that span the eigenspaces of $A$ and $B$ separately. The eigenvalues of $A$ and $B$ form the same sequence $\{\lambda_n \}_{n=1}^{+\infty}$, which satisfies
		\begin{subequations}
			\begin{align} 
				&A\phi_n = \lambda_n \phi_n, \label{Aphi}\\
				& B \psi_n = \lambda_n \psi_n,  \label{Bpsi}\\
				& \langle \phi_m \ket{\psi_n} = \sqrt{\lambda_n} \delta_{mn},
			\end{align}
		\end{subequations}
		for arbitrary $m,n\in \mathbb{Z}^+$.
	\end{theorem}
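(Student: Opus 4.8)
\emph{Proof plan.} The plan is to treat $A=\Pi_1\Pi_2\Pi_1$ and $B=\Pi_2\Pi_1\Pi_2$ as positive self-adjoint contractions, read off their spectral decompositions, and then transport eigenvectors from one operator to the other through the projection $\Pi_2$; this is the operator-theoretic form of the principal-angle (CS) decomposition between $\mathcal{H}_1$ and $\mathcal{H}_2$, with $\lambda_n=\cos^2\theta_n$. First I would record the structural facts: writing $A=(\Pi_2\Pi_1)^{\ast}(\Pi_2\Pi_1)$ shows $A=A^{\ast}\geqslant 0$ and $\|A\|\leqslant 1$, and likewise $B=(\Pi_1\Pi_2)^{\ast}(\Pi_1\Pi_2)\geqslant 0$ with $\|B\|\leqslant 1$; moreover $A$ annihilates $\mathcal{H}_1^{\perp}$ and maps $\mathcal{H}_1$ into itself, so it may be viewed as an operator on $\mathcal{H}_1$, and symmetrically $B$ acts on $\mathcal{H}_2$. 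Applying the spectral theorem to $A$ on $\mathcal{H}_1$ produces the complete orthonormal eigenbasis $\{\phi_n\}$ of $\mathcal{H}_1$ with $A\phi_n=\lambda_n\phi_n$, $\lambda_n\in[0,1]$, which is \eqref{Aphi}; the elementary identity $\lambda_n=\langle\phi_n\ket{A\phi_n}=\langle\Pi_1\phi_n\ket{\Pi_2\phi_n}=\|\Pi_2\phi_n\|^2$ (repeatedly using $\Pi_1\phi_n=\phi_n$ and $\Pi_2^2=\Pi_2$) then records that $\Pi_2\phi_n$ has norm $\sqrt{\lambda_n}$.

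Next, for each index with $\lambda_n>0$ I would define $\psi_n:=\lambda_n^{-1/2}\,\Pi_2\phi_n\in\mathcal{H}_2$ — note that this determines $\psi_n$ uniquely once $\phi_n$ is chosen, with no independent phase freedom, which is exactly why the cross inner product below comes out as the positive square root — and verify the claims by one-line computations. First, $\|\psi_n\|^2=\lambda_n^{-1}\|\Pi_2\phi_n\|^2=1$. Second, using $\Pi_2^2=\Pi_2$ and $\Pi_1\phi_n=\phi_n$, $B\psi_n=\lambda_n^{-1/2}\Pi_2\Pi_1\Pi_2\phi_n=\lambda_n^{-1/2}\Pi_2\Pi_1\Pi_2\Pi_1\phi_n=\lambda_n^{-1/2}\Pi_2 A\phi_n=\lambda_n\psi_n$, which is \eqref{Bpsi}. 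Third, inserting $\Pi_1$ and using $\phi_m\in\mathcal{H}_1$, $\langle\phi_m\ket{\psi_n}=\lambda_n^{-1/2}\langle\phi_m\ket{\Pi_2\phi_n}=\lambda_n^{-1/2}\langle\phi_m\ket{A\phi_n}=\sqrt{\lambda_n}\,\delta_{mn}$, which is precisely the third claimed identity and, with $m,n$ both ranging over the positive-eigenvalue indices, also gives $\langle\psi_m\ket{\psi_n}=(\lambda_m\lambda_n)^{-1/2}\langle\phi_m\ket{A\phi_n}=\delta_{mn}$, i.e.\ orthonormality of $\{\psi_n\}$. Running the same steps backwards yields $\phi_n=\lambda_n^{-1/2}\Pi_1\psi_n$, so $\phi\mapsto\lambda^{-1/2}\Pi_2\phi$ is a unitary from the $\lambda$-eigenspace of $A$ onto that of $B$ for every $\lambda>0$ (its inverse is $\lambda^{-1/2}\Pi_1$, since $\lambda^{-1}\Pi_1\Pi_2\phi=\lambda^{-1}A\phi=\phi$ on that eigenspace); hence the nonzero eigenvalues of $A$ and $B$ coincide with multiplicities.

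It remains to treat the eigenvalue $0$ and assemble the full bases. Because $A\geqslant 0$, the kernel of $A$ inside $\mathcal{H}_1$ is $\{x\in\mathcal{H}_1:\Pi_2 x=0\}=\mathcal{H}_1\cap\mathcal{H}_2^{\perp}$, and likewise the kernel of $B$ inside $\mathcal{H}_2$ is $\mathcal{H}_2\cap\mathcal{H}_1^{\perp}$; I would choose arbitrary orthonormal bases of these two subspaces, adjoin them to the positive-eigenvalue vectors already constructed, and pad the sequence $\{\lambda_n\}$ with zeros so that the two index sets agree. The relation $\langle\phi_m\ket{\psi_n}=\sqrt{\lambda_n}\delta_{mn}$ then survives in every remaining case: whenever $\lambda_n=0$ we have $\psi_n\in\mathcal{H}_1^{\perp}$ while $\phi_m\in\mathcal{H}_1$, so both sides vanish; whenever $\lambda_m=0$ but $\lambda_n>0$ we have $\phi_m\in\mathcal{H}_2^{\perp}$ while $\psi_n\in\mathcal{H}_2$, and again both sides vanish. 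Completeness of $\{\psi_n\}$ for $\mathcal{H}_2$ follows from completeness of $\{\phi_n\}$ for $\mathcal{H}_1$ via the unitary just exhibited together with the matched kernel bases (equivalently, by applying the spectral theorem to $B$ directly). The hard part is none of these computations, which are all one or two lines, but the functional-analytic input that $A$ — and hence $B$ — genuinely possesses a complete orthonormal eigenbasis, i.e.\ has pure point spectrum; this is the implicit standing hypothesis under which the theorem is applied, and it holds automatically in every setting used in this paper, where $\Pi_1\Pi_2\Pi_1$ is realized as a Hilbert--Schmidt integral operator and is therefore compact. The only residual subtlety, that the multiplicity of the eigenvalue $0$ on the two sides need not be equal, is absorbed into the zero-padding above.
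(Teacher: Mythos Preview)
Your argument is correct and is in fact the standard route to this result: identify $A=(\Pi_2\Pi_1)^{\ast}(\Pi_2\Pi_1)$ and $B=(\Pi_1\Pi_2)^{\ast}(\Pi_1\Pi_2)$ as the two Gram operators of the partial isometry $\Pi_2\Pi_1$, diagonalize $A$ on $\mathcal{H}_1$, push eigenvectors across with $\psi_n=\lambda_n^{-1/2}\Pi_2\phi_n$, and handle the kernel separately. The computations you give are all right, and you are also right to flag that the existence of a complete orthonormal \emph{eigen}basis for $A$ is an extra hypothesis (pure point spectrum / compactness) that the statement leaves implicit but that is satisfied in the Hilbert--Schmidt settings of this paper.

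As for comparison: the paper does not actually prove this theorem. It is quoted verbatim from \cite{ihara1993information} and used as a black box, so there is no ``paper's own proof'' to set your argument against. Your write-up therefore supplies what the paper omits, and does so along the same lines as the classical source.
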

	\begin{remark}
		From the projection operators we can obtain the same mathematical forms as (\ref{eigenfunction_3d}), leading to the same eigenfunctions and eigenvalues. 
	\end{remark}
	Now we use {\bf Theorem 1} to analyze the DoF of the received electromagnetic fields in the four-dimensional space-time domain. In our considered scenario, the Hilbert space $\mathcal{H}_1$ contains the electromagnetic fields that are constrained in space domain $\mathcal{R}$ and constrained in time domain $\mathcal{T}$. The Hilbert space $\mathcal{H}_2$ contains the electromagnetic fields that are constrained in frequency domain $\mathcal{W}$ and constrained in wavenumber domain $\mathcal{K}$. Moreover, $\mathcal{H}_2$ also satisfies $w^2 = c^2\left\| {\bf k} \right\|^2 $ according to electromagnetic constraints. Therefore, we have $\Pi_1f = \mathcal{B}_1 f $ and $\Pi_2 f = \mathcal{F}^{-1} \mathcal{B}_2 \mathcal{F} f$, where 
	\begin{equation}
		\begin{aligned}
			\mathcal{B}_1 f = \left\{\begin{matrix}
			f	&  t\in \mathcal{T} \& {\bf x} \in \mathcal{R},  \\
			0	&  {\rm otherwise},
			\end{matrix}\right.
		\end{aligned}
	\end{equation}
	and
		\begin{equation}
		\begin{aligned}
			\mathcal{B}_2 f = \left\{\begin{matrix}
				f	&  w= \pm c \left\| {\bf k} \right\| \& {\bf k} \in \mathcal{K},  \\
				0	&  {\rm otherwise}.
			\end{matrix}\right.
		\end{aligned}
	\end{equation}
	The concentration regions $\mathcal{H}_1$ and $\mathcal{H}_2$ constructed from the electromagnetic constraints are shown in Fig. \ref{H1H2projection}.
	
	Noted that we can abbreviate $B$ as $\Pi_2 \Pi_1$ since the eigenfunctions $\psi_n$ are in $\mathcal{H}_2$. We have 
	\begin{equation}
		\begin{aligned}
		(\Pi_2 \Pi_1 f)({\bf x}',t') &=  (\mathcal{F}^{-1} \mathcal{B}_2 \mathcal{F} \mathcal{B}_1f)({\bf x},t)
		\\& = \int_{w= \pm c \left\| {\bf k} \right\| \& {\bf k} \in \mathcal{K}} e^{{\rm j}{\bf k}\cdot{\bf x}-{\rm j}wt} \int_{t\in \mathcal{T} \& {\bf x} \in \mathcal{R}} e^{-{\rm j}{\bf k}\cdot {\bf x}'+{\rm j}wt' } f({\bf x}',t'){\rm d}^3{\bf x}'{\rm d}t' {\rm d}^3{\bf k}{\rm d}w
		\\& = \int_{t\in \mathcal{T} \& {\bf x} \in \mathcal{R}}  \int_{w= \pm c \left\| {\bf k} \right\| \& {\bf k} \in \mathcal{K}} e^{{\rm j}{\bf k}\cdot({\bf x}-{\bf x}')-{\rm j}w(t-t')}
		 {\rm d}^3{\bf k}{\rm d}w f({\bf x}',t'){\rm d}^3{\bf x}'{\rm d}t',
		\end{aligned}
	\end{equation}
	and the integral equation is $(\Pi_2 \Pi_1 f)({\bf x}',t') = \lambda f({\bf x},t)$.

	\begin{figure}
		\centering 
		\includegraphics[width=0.7\textwidth]{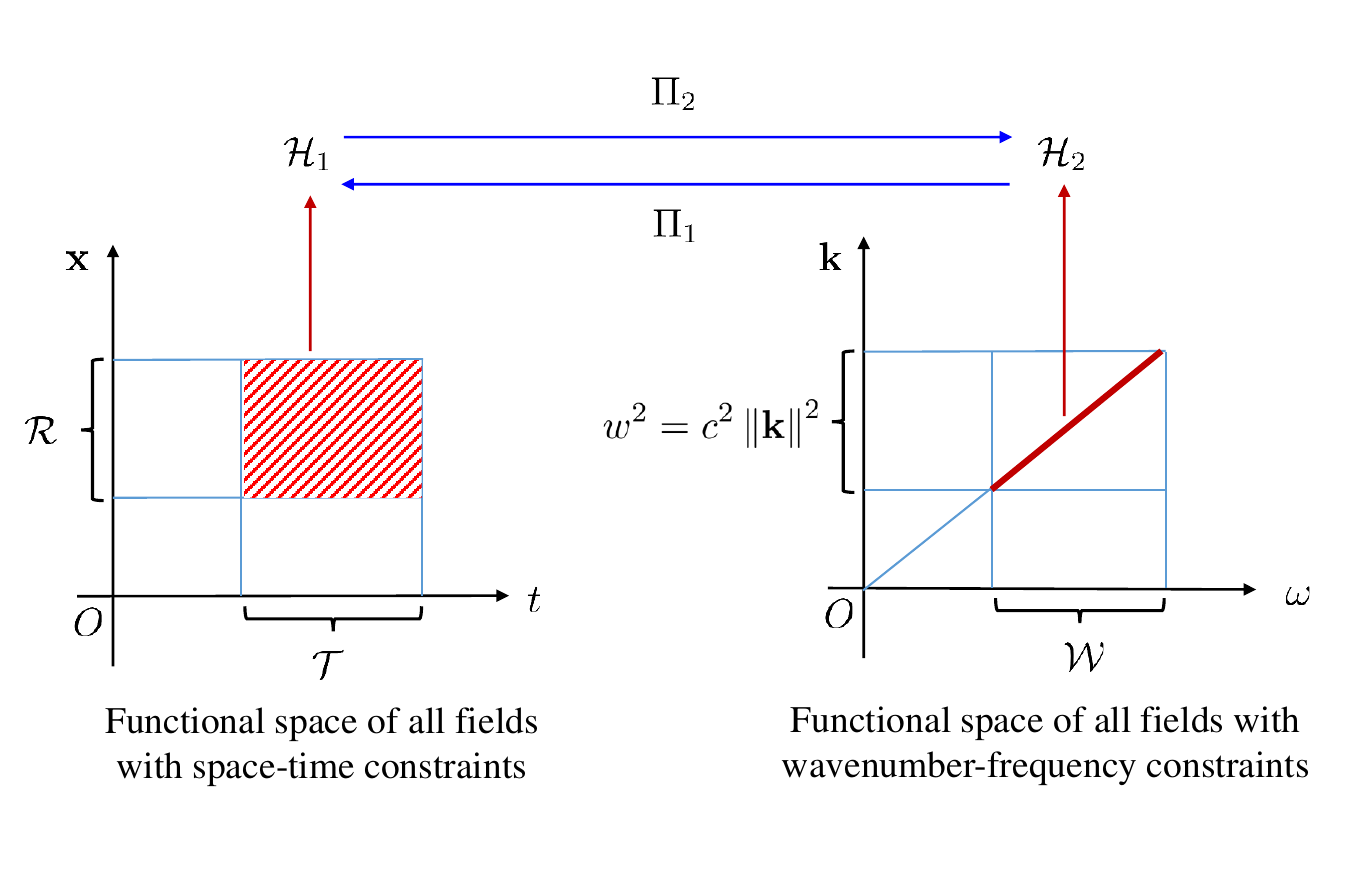} 
		\caption{The two functional spaces $\mathcal{H}_1$ and $\mathcal{H}_2$ include all fields with space-time constraints and all fields with wavenumber-frequency constraints separately. $\mathcal{H}_1$ and $\mathcal{H}_2$ are subspaces of $\mathcal{H}$. Through projection operators $\Pi_1$ and $\Pi_2$ we can project arbitrary $x \in \mathcal{H}$ onto $\mathcal{H}_1$ and $\mathcal{H}_2$.  } 
		\label{H1H2projection}
	\end{figure}

	If we only consider a single time point, we set $t=t'=t_0$, the eigenfunction can be simplified as 
	\begin{equation}
		\begin{aligned}
			\int_{{\bf x} \in \mathcal{R}}  \int_{ {\bf k} \in \mathcal{K}} e^{{\rm j}{\bf k}\cdot({\bf x}-{\bf x}')}
			{\rm d}^3{\bf k} f({\bf x}',t_0){\rm d}^3{\bf x}' = \lambda f({\bf x},t_0).
		\end{aligned}
	\end{equation}
	It is worth noting that by omitting time interval, the eigenfunction degenerates to the form in {\bf Section \ref{subsec_3d_slepian}}, which verifies the above analysis in three-dimensional space. 
	
	Now we consider using a continuous time interval. Moreover, we assume that the frequency band is $\omega \in [-ck_0, ck_0]$ and the wavenumber domain satisfies $\left\| {\bf k} \right\| \leqslant k_0$. Then we have 
	\begin{equation}
		\begin{aligned}
			\int_{w= \pm c \left\| {\bf k} \right\| \& {\bf k} \in \mathcal{K}} e^{{\rm j}{\bf k}\cdot({\bf x}-{\bf x}')-{\rm j}w(t-t')}
			{\rm d}^3{\bf k}{\rm d}w
			= & \int_{0}^{k_0} \int_{{\Omega}=\{\hat{\bf k}: \left\| \hat{\bf k} \right\|=1\}} e^{{\rm j}k \hat{\bf k}\cdot({\bf x}-{\bf x}')} {\rm d}^2S 2\cos(ck(t-t')) k^2{\rm d}k
			\\ = & \int_{0}^{k_0} 4\pi \frac{\sin k \left\| {\bf x}-{\bf x}' \right\|}{k \left\| {\bf x}-{\bf x}' \right\|}  2\cos(ck(t-t'))k^2{\rm d}k,
		\end{aligned}
	\end{equation}
	which can be further simplified to (\ref{equ_kernel_expression}).
	By substituting the kernel function in $\Pi_2 \Pi_1 f({\bf x}',t') = \lambda f({\bf x},t)$, we can obtain the eigenvalues and eigenfunctions which correspond to DoF and orthogonal patterns for the electromagnetic fields.
	
			\begin{figure*}
		\begin{equation}
			\begin{aligned}
				-\frac{4\pi\Big(- \left\| {\bf x}-{\bf x}' \right\| + \left\| {\bf x}-{\bf x}' \right\| \cos(k_0\left\| {\bf x}-{\bf x}' \right\|)\cos(ck_0(t-t')) + c(t-t')\sin(k_0\left\| {\bf x}-{\bf x}' \right\|)\sin(ck_0(t-t'))  \Big)}{\left\| {\bf x}-{\bf x}' \right\|(\left\| {\bf x}-{\bf x}' \right\|^2-c^2(t-t')^2)^2}
			\end{aligned}
			\label{equ_kernel_expression}
		\end{equation}
		{\noindent} \rule[-10pt]{18cm}{0.05em}
	\end{figure*}

	\section{Relationship between functional DoF and channel DoF}
	
	In this section we will discuss the relationship between functional DoF of electromagnetic fields and the channel DoF, which shows that how the conclusions about functional DoF obtained from Slepian concentration problem or Landau's eigenvalue theorem can affect the performance of wireless communication systems. First we will give rigorous mathematical definitions of functional DoF and channel DoF. Then we will provide a theorem showing the relationship between them, which reveals that the channel DoF is upper bounded by the functional DoF at the transceivers.
	
	\begin{definition}
		(functional DoF) Let $\mathcal{H}$ be a Hilbert space and $\{f_i\}_{i=1}^n \in \mathcal{H}$ be a set of orthogonal base functions. If for arbitrary function $f \in \mathcal{H}$ that satisfies $\left\| f \right\|=1$, there always exists a set of numbers $a_i$ that 
		$\left\| f-\sum_{i=1}^n a_i f_i  \right\| \leqslant \epsilon$. The minimum $n$ that satisfies the above conditions is the functional DoF of $\mathcal{H}$ with error $\epsilon$, i.e.,
		\begin{equation}
			\mathsf{fDoF}_{\epsilon} (\mathcal{H}) = {\rm min}\{ n: \underset{\mathcal{H}_n \subset \mathcal{H}}{\rm inf} \sup_{f: \left\| f \right\|=1} \inf_{g \in \mathcal{H}_n} \left\| f-g \right\| \leqslant \epsilon  \},
		\end{equation}
		where $\mathcal{H}_n$ is an $n$-dimensional subspace of $\mathcal{H}$.
	\end{definition}
	
	\begin{remark}
		(relationship between the functional DoF and the Slepian concentration problem) From the Slepian concentration problem we have a set of orthogonal eigen-functions $f_i \in \mathcal{H}$ and eigenvalues $\lambda_i$ for functions band-limited in wavenumber domain and approximately band-limited in the space domain. From \eqref{eqn_min_f_approx} we know that these functions satisfy the condition that $\underset{a_i^n}{\rm min}\left\| f- \sum_{i=1}^n a_i f_i    \right\|^2 = \sum_{i=n+1}^{+\infty}b_i^2 \lambda_i $, where $b_i =\frac{ \langle f \ket{f_i}}{\left\| f_i \right\|^2}= \frac{1}{\lambda_i} \langle f \ket{f_i}$ and the minimum is achieved when $a_i=b_i$. Moreover, we have
		\begin{equation}
			\begin{aligned}
				\underset{a_i^n}{\rm min}\left\| f- \sum_{i=1}^n a_i f_i    \right\|^2	\leqslant (\sum_{i=n+1}^{+\infty}b_i^2) \lambda_{n+1} \leqslant \lambda_{n+1}.
			\end{aligned}
		\end{equation}  
		Therefore, the number of large eigenvalues in the Slepian concentration problem determines the functional DoF of the space $\mathcal{H}$, i.e.,
		\begin{equation}
			\begin{aligned}
				\mathsf{fDoF}_{\epsilon} (\mathcal{H}) \leqslant \#\{n: \lambda_{n+1} \geqslant \epsilon^2\}.
			\end{aligned}
		\end{equation} 
	\end{remark}
	
	\begin{definition}
		(channel DoF) Let $\mathcal{H}_{\rm t}$ and $\mathcal{H}_{\rm r}$ be two Hilbert spaces and $P:\mathcal{H}_{\rm t} \rightarrow \mathcal{H}_{\rm r}$ be an operator. Let $\{g_i\}_{i=1}^{+\infty}$ and $\{f_i\}_{i=1}^{+\infty}$ be a set of orthogonal bases for $\mathcal{H}_{\rm t}$ and $\mathcal{H}_{\rm r}$ separately. Let $\{P_i\}_{i=1}^n: \mathcal{H}_{\rm t} \rightarrow \mathcal{H}_{\rm r}$ be a set of rank-1 operators. 
		If for arbitrary $g \in \mathcal{H}_{\rm t}$ that satisfies $\left\| g \right\|=1$, there always exists a set of numbers $b_i$ that satisfies $\left\| Pg-\sum_{i=1}^{n} b_iP_ig  \right\| \leqslant \epsilon$. The minimum $n$ that satisfies the above conditions is the channel DoF of the operator $P: \mathcal{H}_{\rm t} \rightarrow \mathcal{H}_{\rm r}$ with error $\epsilon$, i.e., 
		\begin{equation}
			\begin{aligned}
			\mathsf{cDoF}_{\epsilon} (\mathcal{H}_{\rm t},\mathcal{H}_{\rm r},P) =& {\rm min}\{ n: \underset{\{P_i\}_{i=1}^n: \mathcal{H}_{\rm t} \rightarrow \mathcal{H}_{\rm r}}{\rm inf} \sup_{g: \left\| g \right\|=1} \inf_{\{b_i\}_{i=1}^n} \left\| Pg-\sum_{i=1}^{n} b_iP_ig  \right\| \leqslant \epsilon  \}.
			\end{aligned}
		\end{equation}
	\end{definition}
	
	\begin{remark}
		(Motivation of the definition of the channel DoF based on operator)
		The definition of the channel DoF here is a natural extension of the classical channel DoF based on matrices. It is well known that for a channel matrix $\bf{P}$ the channel DoF can be obtained by analyzing the singular value decomposition (SVD) of $\bf{P}$, leading to ${\bf P} = {\bf U}\boldsymbol{\Lambda}\bf{V} = \sum_{i=1}^n \lambda_i {\bf u}_i{\bf v}_i^{\rm T}$. 
		If we can approximating ${\bf P}$ by the linear combination of a number of matrices ${\bf u}_i{\bf v}_i^{\rm T}$, the DoF of the channel is found. For general electromagnetic environment we can assume that the integral operator $P$ that represents the channel is "smooth" to ensure that the kernel function and the first-order derivative of the kernel function are square-integrable on the source region. Then we know that the operator $P$ is a trace-class operator\cite{bornemann2010numerical}. Then we can decompose $P$ as the sum of a set of operators multiplying the eigenvalues of $P$, similar to the SVD of matrices. Then in {\bf Definition} 2 the decomposition of matrices is extended to the decomposition of operators.
	\end{remark}
	
				\begin{figure*}
		\centering 
		\includegraphics[width=1\textwidth]{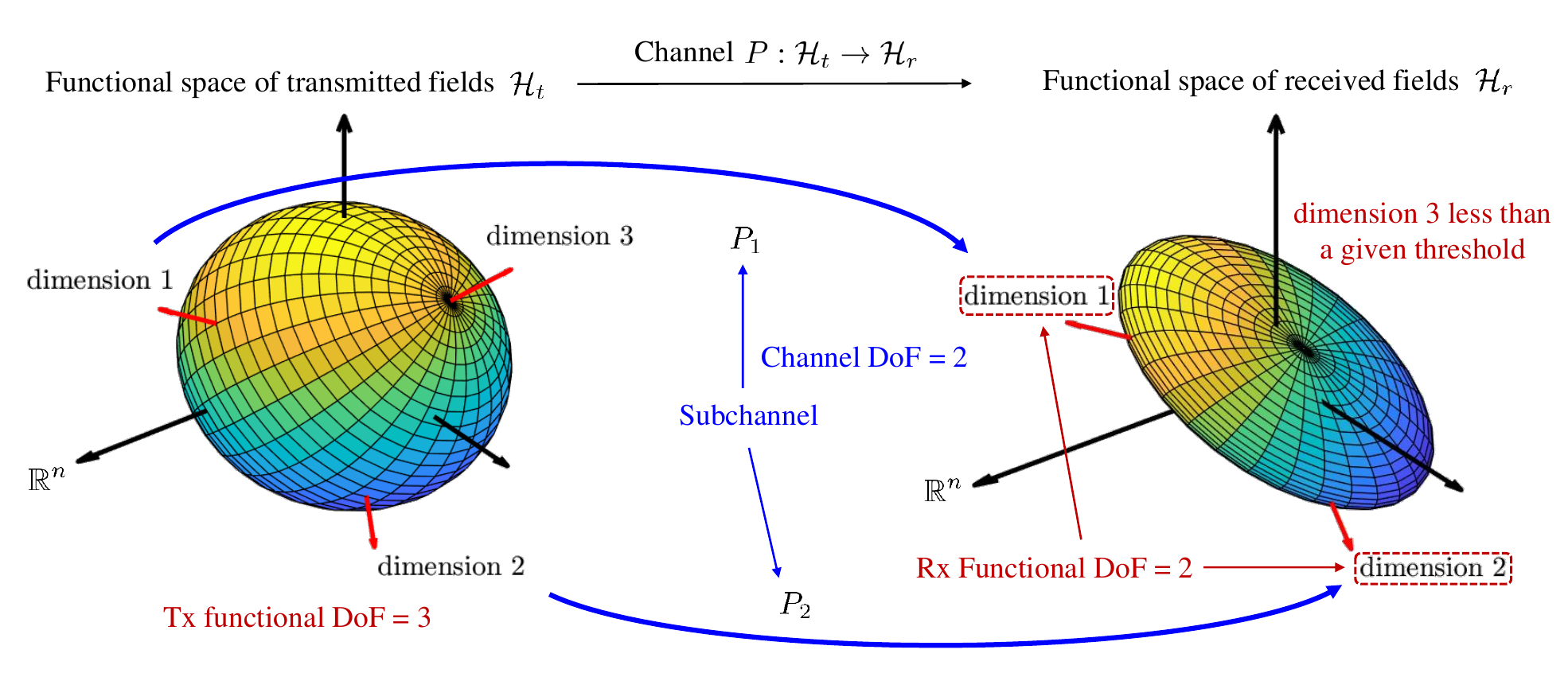} 
		\caption{Description of the definition of functional DoF and channel DoF.} 
		\label{FDoFandCDoF}
	\end{figure*}
	
	\begin{lemma}
		Assume that $g$ is square-integrable on $\mathcal{A} \subset \mathbb{R}^n$, the kernel function of integral operator $P$ is square-integrable on $\mathcal{A}*\mathcal{B} \subset \mathbb{R}^{n+m}$.
		We define $\left\|  g \right\|:=\sqrt{\int_{\mathcal{A}}|g({\bf x})|^2{\rm d}^n{\bf x}}$ and $\left\| P \right\|:=\sqrt{\int_{\mathcal{B}} \int_{\mathcal{A}}\Big|P({\bf x},{\bf x}')\Big|^2 {\rm d}^n{\bf x}' {\rm d}^m{\bf x} }$. Then we have $\left\| Pg  \right\| \leqslant \left\| P  \right\|\left\| g  \right\| $.
	\end{lemma}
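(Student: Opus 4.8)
The plan is to recognize that this is exactly the Hilbert--Schmidt bound for an integral operator, so that the inequality follows from a single pointwise application of the Cauchy--Schwarz inequality to the defining integral, followed by Tonelli's theorem to reassemble the double integral defining $\left\| P \right\|^2$. Throughout, $Pg$ is understood as the function on $\mathcal{B}$ given by $(Pg)({\bf x}) = \int_{\mathcal{A}} P({\bf x},{\bf x}')\,g({\bf x}')\,{\rm d}^n{\bf x}'$, and $\left\| Pg \right\|$ as $\big(\int_{\mathcal{B}}|(Pg)({\bf x})|^2\,{\rm d}^m{\bf x}\big)^{1/2}$, consistent with the norms defined in the statement.

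First I would dispose of the measure-theoretic bookkeeping. Since $\int_{\mathcal{B}}\int_{\mathcal{A}}|P({\bf x},{\bf x}')|^2\,{\rm d}^n{\bf x}'\,{\rm d}^m{\bf x} = \left\| P \right\|^2 < \infty$, Tonelli's theorem guarantees that for almost every ${\bf x}\in\mathcal{B}$ the section ${\bf x}'\mapsto P({\bf x},{\bf x}')$ lies in $L^2(\mathcal{A})$; combined with $g\in L^2(\mathcal{A})$, the Cauchy--Schwarz inequality then shows that ${\bf x}'\mapsto P({\bf x},{\bf x}')g({\bf x}')$ is absolutely integrable on $\mathcal{A}$, so $(Pg)({\bf x})$ is well defined for a.e.\ ${\bf x}$. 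The core estimate is then, for such ${\bf x}$,
\begin{equation}
	|(Pg)({\bf x})|^2 = \left| \int_{\mathcal{A}} P({\bf x},{\bf x}')\,g({\bf x}')\,{\rm d}^n{\bf x}' \right|^2 \leqslant \left( \int_{\mathcal{A}} |P({\bf x},{\bf x}')|^2\,{\rm d}^n{\bf x}' \right)\left( \int_{\mathcal{A}} |g({\bf x}')|^2\,{\rm d}^n{\bf x}' \right) = \left\| g \right\|^2 \int_{\mathcal{A}} |P({\bf x},{\bf x}')|^2\,{\rm d}^n{\bf x}'.
\end{equation}
Integrating this inequality over ${\bf x}\in\mathcal{B}$ and invoking Tonelli once more to recognize the right-hand side, I obtain $\left\| Pg \right\|^2 \leqslant \left\| g \right\|^2 \int_{\mathcal{B}}\int_{\mathcal{A}}|P({\bf x},{\bf x}')|^2\,{\rm d}^n{\bf x}'\,{\rm d}^m{\bf x} = \left\| g \right\|^2 \left\| P \right\|^2$; taking square roots yields $\left\| Pg \right\| \leqslant \left\| P \right\|\left\| g \right\|$.

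There is no serious obstacle in this argument: the only point requiring any care is the interchange of orders of integration and the a.e.\ finiteness of the inner integral, which is precisely why the square-integrability hypotheses on $P$ and $g$ are imposed. Accordingly, I would keep the write-up short, stating the Tonelli/Cauchy--Schwarz justification in one or two sentences and devoting the rest to the displayed estimate above.
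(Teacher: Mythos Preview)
Your proof is correct and follows exactly the paper's approach: apply the Cauchy--Schwarz inequality to the inner integral $\int_{\mathcal{A}}P({\bf x},{\bf x}')g({\bf x}')\,{\rm d}^n{\bf x}'$ and then integrate over $\mathcal{B}$. You are simply more explicit than the paper about the Tonelli/measurability bookkeeping, which is a harmless addition.
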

	\begin{proof}
		For $\left\| Pg  \right\|$ we have
		\begin{equation}
			\begin{aligned}
				\left\|  Pg \right\| &= \sqrt{\int_{\mathcal{B}} \Big|\int_{\mathcal{A}}P({\bf x},{\bf x}')g({\bf x}'){\rm d}^n{\bf x}'\Big|^2{\rm d}^m{\bf x}}
				\\&\leqslant \sqrt{\int_{\mathcal{B}} \int_{\mathcal{A}}\Big|P({\bf x},{\bf x}')\Big|^2 {\rm d}^n{\bf x}' {\rm d}^m{\bf x} }
				\sqrt{\int_{\mathcal{A}}\Big|g({\bf x}')\Big|^2{\rm d}^n{\bf x}'},
			\end{aligned}
		\end{equation}
		according to Cauchy–Schwarz inequality.
	\end{proof}
	
	To intuitively show the definition of the functional DoF and channel DoF, we plot Fig. \ref{FDoFandCDoF} which includes the functional space of transmitted and received fields $\mathcal{H}_t$ and $\mathcal{H}_r$. It is shown that the functional DoF corresponds to the number of orthogonal base functions in the functional space needed to construct the space with required accuracy. Moreover, the channel DoF corresponds to the number of subchannels needed to project arbitrary function in $\mathcal{H}_t$ to a function in $\mathcal{H}_r$ with required accuracy. It is expected that the channel DoF is upper-bounded by the functional DoF of the fields at the transceivers. We provide the following theorem to rigorously prove their relationship. 
	
	\begin{theorem}
		(The relationship between the functional DoF and the channel DoF) Let $\mathcal{H}_{\rm t}$ and $\mathcal{H}_{\rm r}$ be two Hilbert spaces and $P:\mathcal{H}_{\rm t} \rightarrow \mathcal{H}_{\rm r}$ be an operator. The functional DoF of $\mathcal{H}_{\rm t}$ with error $\epsilon_1$ is $N_1$. The functional DoF of $\mathcal{H}_{\rm r}$ with error $\epsilon_2$ is $N_2$. Then the channel DoF of the operator $P: \mathcal{H}_{\rm t} \rightarrow \mathcal{H}_{\rm r}$ with error $\left\| P \right\| \epsilon_1 + \left\|  P \right\| (1+\epsilon_1) \epsilon_2$ is at most ${\rm max}\{N_1,N_2\}$, i.e.,
		\begin{equation}
			\begin{aligned}
			&\mathsf{cDoF}_{\left\| P \right\| \epsilon_1 + \left\|  P \right\| (1+\epsilon_1) \epsilon_2}({\mathcal{H}_{\rm t}},{\mathcal{H}_{\rm r}},P) \leqslant {\rm max}\{\mathsf{fDoF}_{\epsilon_1}(\mathcal{H}_{\rm t}),\mathsf{fDoF}_{\epsilon_2}(\mathcal{H}_{\rm r})\}.
			\end{aligned}
		\end{equation}
	\end{theorem}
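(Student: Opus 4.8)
The plan is to construct an explicit $n$-dimensional family of rank-$1$ operators, with $n = \max\{N_1,N_2\}$, that approximates $P$ to the stated accuracy, and then invoke the definition of $\mathsf{cDoF}$. Let $\{g_i\}_{i=1}^{N_1}$ be the orthonormal basis of the best $N_1$-dimensional subspace $\mathcal{H}_{\rm t}^{(N_1)}$ furnished by $\mathsf{fDoF}_{\epsilon_1}(\mathcal{H}_{\rm t})$, and let $\{f_i\}_{i=1}^{N_2}$ be the analogous basis of the best $N_2$-dimensional subspace $\mathcal{H}_{\rm r}^{(N_2)}\subset\mathcal{H}_{\rm r}$. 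Denote by $\Pi_{\rm t}$ the orthogonal projection of $\mathcal{H}_{\rm t}$ onto $\mathcal{H}_{\rm t}^{(N_1)}$ and by $\Pi_{\rm r}$ the orthogonal projection of $\mathcal{H}_{\rm r}$ onto $\mathcal{H}_{\rm r}^{(N_2)}$. The candidate approximation of $P$ is the finite-rank operator $\tilde P := \Pi_{\rm r} P \Pi_{\rm t}$, which has rank at most $\max\{N_1,N_2\}$ and therefore decomposes as a sum of at most $\max\{N_1,N_2\}$ rank-$1$ operators $P_i$; these are the subchannels we feed into {\bf Definition 2}.

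The core estimate is a triangle-inequality split of $\|Pg - \tilde P g\|$ for an arbitrary unit-norm $g\in\mathcal{H}_{\rm t}$. First I would write
\begin{equation}
	\left\| Pg - \Pi_{\rm r} P \Pi_{\rm t} g \right\| \leqslant \left\| P(g - \Pi_{\rm t} g) \right\| + \left\| (I - \Pi_{\rm r})(P \Pi_{\rm t} g) \right\|.
\end{equation}
For the first term, the functional-DoF property of $\mathcal{H}_{\rm t}$ gives $\|g - \Pi_{\rm t} g\| \leqslant \epsilon_1$ (since $\Pi_{\rm t} g$ is the best approximation of $g$ in $\mathcal{H}_{\rm t}^{(N_1)}$, and the $\sup$--$\inf$ in {\bf Definition 1} is at most $\epsilon_1$), and {\bf Lemma 1} bounds $\|P(g-\Pi_{\rm t} g)\| \leqslant \|P\|\,\|g - \Pi_{\rm t} g\| \leqslant \|P\|\epsilon_1$. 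For the second term, note $\|\Pi_{\rm t} g\| \leqslant \|g\| + \|g - \Pi_{\rm t} g\| \leqslant 1 + \epsilon_1$, hence $\|P\Pi_{\rm t} g\| \leqslant \|P\|(1+\epsilon_1)$ by {\bf Lemma 1} again; then applying the functional-DoF property of $\mathcal{H}_{\rm r}$ to the unit vector $P\Pi_{\rm t} g / \|P\Pi_{\rm t} g\|$ yields $\|(I-\Pi_{\rm r})(P\Pi_{\rm t} g)\| \leqslant \epsilon_2 \|P\Pi_{\rm t} g\| \leqslant \|P\|(1+\epsilon_1)\epsilon_2$. Summing gives $\|Pg - \tilde P g\| \leqslant \|P\|\epsilon_1 + \|P\|(1+\epsilon_1)\epsilon_2$, uniformly in $g$, which is exactly the error budget in the statement; choosing $b_i$ to be the coordinates of $\tilde P g$ in the $\{f_i\}$ basis completes the verification that $\max\{N_1,N_2\}$ subchannels suffice.

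The main obstacle is the bookkeeping around {\bf Definition 1}: the definition only asserts existence of \emph{some} $n$-dimensional subspace achieving the $\epsilon$-width, and one must be careful that the infimum over subspaces is actually attained (or work with a subspace achieving width $\leqslant \epsilon + \delta$ and let $\delta \to 0$), and that "best approximation" inside a closed finite-dimensional subspace of a Hilbert space coincides with orthogonal projection — both standard but worth stating cleanly. A second delicate point is the rank count: $\tilde P = \Pi_{\rm r} P \Pi_{\rm t}$ has rank $\leqslant \min\{N_1,N_2\}$, which is even stronger than $\max\{N_1,N_2\}$, so the claimed bound holds a fortiori; I would remark on this rather than weaken the construction. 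Everything else is the two-term triangle inequality above together with {\bf Lemma 1}, so no heavy computation is needed.
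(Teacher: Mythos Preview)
Your argument is correct and rests on the same two-term split as the paper's proof: bound $\|P(g-\hat g)\|$ by $\|P\|\epsilon_1$ via {\bf Lemma~1}, then bound the receiver-side approximation error of $P\hat g$ by $\|P\|(1+\epsilon_1)\epsilon_2$. Where you differ is in the construction. The paper pads the two bases to common size $N_3=\max\{N_1,N_2\}$, fixes the rank-$1$ operators $P_i=\ket{f_i}\bra{g_i}/\|g_i\|^2$, and then for each $g$ chooses $g$-dependent coefficients $a_i/b_i$ (with $b_i$ the projection coefficients of $g$ and $a_i$ those of $f=P\sum_i b_i g_i$); it must also verify $P_1(g-\sum b_ig_i)=0$ via the orthogonality of the residual. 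Your route through the single fixed operator $\tilde P=\Pi_{\rm r}P\Pi_{\rm t}$ is cleaner: no padding, no $g$-dependent coefficients, no division by $b_i$ (which in the paper's construction is a genuine hazard when some $b_i=0$). More importantly, your observation that $\mathrm{rank}\,\tilde P\leqslant\min\{N_1,N_2\}$ actually delivers the sharper bound $\mathsf{cDoF}\leqslant\min\{N_1,N_2\}$ that the paper only conjectures in the remark following the theorem; this is a real improvement, not just bookkeeping.

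One small cleanup: your closing remark that the $b_i$ are ``the coordinates of $\tilde P g$ in the $\{f_i\}$ basis'' does not by itself exhibit rank-$1$ operators $P_i$ satisfying $\sum_i b_i P_i g=\tilde P g$. The clean way (which you already gesture at) is to take the SVD $\tilde P=\sum_{i=1}^{r}\sigma_i\ket{v_i}\bra{u_i}$ with $r\leqslant\min\{N_1,N_2\}$ and set $P_i=\ket{v_i}\bra{u_i}$, $b_i=\sigma_i$; these coefficients are then fixed, which is stronger than what {\bf Definition~2} requires. Your caveat about the infimum in {\bf Definition~1} possibly not being attained is well placed and handled the standard way you indicate.
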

	
	\begin{proof}
		For the space $\mathcal{H}_{\rm t}$, it has DoF $N_1$ and with error $\epsilon_1$ according to the above definition. The basis on $\mathcal{H}_{\rm t}$ are defined as $g_{i=1}^{N_1}$, which constructs the subspace $\hat{\mathcal{H}}_{\rm t}$. For the space $\mathcal{H}_{\rm r}$, it has DoF $N_2$ with error $\epsilon_2$ according to the above definition. The basis on $\mathcal{H}_{\rm r}$ are defined as $f_{i=1}^{N_2}$, which constructs the subspace $\hat{\mathcal{H}}_{\rm r}$. If $N_1=N_2$, we have a set of operators $P_1^n$ that satisfies $P_ig_j=f_i {\mathbbm 1}_{i=j}$. Moreover, $\bra{P_i g}P_j g\rangle =0$ for arbitrary $g \in \hat{\mathcal{H}}_{\rm t}$ and $i \neq j$. Otherwise, if $N_1<N_2$, we add $N_2-N_1$ base functions $g_{i=N_1+1}^{N_2}$ in $\mathcal{H}_{\rm t}$, which expands $\hat{\mathcal{H}}_t$ to $\tilde{\mathcal{H}}_{\rm t}$. It is obvious that $\mathcal{H}_{\rm t}$ has DoF $N_2$ with error $\tilde{\epsilon}_1\leqslant \epsilon_1$. If $N_1>N_2$, similar operation can be done on $\mathcal{H}_{\rm r}$, showing that $\mathcal{H}_{\rm r}$ has DoF $N_1$ with error $\tilde{\epsilon}_2\leqslant \epsilon_2$.
		
		From the above analysis, we define $N_3 = {\rm max}\{N_1,N_2\}$.
		Now for arbitrary $g \in \mathcal{H}_{\rm t} $ and $\left\| g\right\|=1$, there exists a set of parameters $b_i$ that satisfies $\left\|g-\sum_{i=1}^{N_3}b_ig_i \right\| \leqslant \epsilon_1$. From \eqref{eqn_min_f_approx} we know that the minimum of $\left\|g-\sum_{i=1}^{N_3}b_ig_i\right\|$ is achieved when $b_i = \frac{ \langle g \ket{g_i}}{\left\| g_i \right\|^2}$. Then we have
		\begin{equation}
			\begin{aligned}
				\langle g-\sum_{i=1}^{N_3}b_ig_i \ket{g_j} =\langle g\ket{g_j} - b_i \langle g_i\ket{g_j}=0.
			\end{aligned}
		\end{equation}
 Then we choose $b_i$ that satisfies the above conditions. For arbitrary $P$ we have $\left\| Pg-P \sum_{i=1}^{N_3}b_ig_i \right\| \leqslant \left\| P \right\| \epsilon_1$ according to {\bf Lemma 1}. Noted that there exists $f=P\sum_{i=1}^{N_3}b_ig_i \in \mathcal{H}_{\rm r}$. Moreover, according to the definition of the DoF of space $\mathcal{H}_{\rm r}$, we have $a_{i=1}^{N_3}$ that satisfy $\left\| f-\sum_{i=1}^{N_3} a_i f_i  \right\| \leqslant \left\| f \right\| \tilde{\epsilon}_2 \leqslant \left\|  P \right\| (1+\tilde{\epsilon}_1) \tilde{\epsilon}_2$. For the rank-1 operators $P_i$ we construct them as $P_i = \frac{\ket{f_i} \bra{g_i}}{\left\| g_i^2\right\|^2}$. Then we can construct $P_1 = \sum_{i=1}^{N_3} a_i P_i/b_i$, which leads to $P_1\sum_{i=1}^{N_3}b_ig_i =\sum_{i=1}^{N_3} a_i f_i $.
		
		Based on the construction scheme of $P_1$, we have 
		\begin{equation}
			\begin{aligned}
				P_1 (g-\sum_{i=1}^{N_3}b_ig_i ) &= \sum_{j=1}^{N_3}\frac{a_j}{b_j} \ket{f_j} \bra{g_j} g-\sum_{i=1}^{N_3}b_ig_i \rangle
				 = \sum_{j=1}^{N_3}\frac{a_j}{b_j} \ket{f_j} \left(\bra{g-\sum_{i=1}^{N_3}b_ig_i} g_j \rangle \right)^* = 0.
			\end{aligned}
		\end{equation}
		
		Then, we have the following inequality
		\begin{equation}
			\begin{aligned}
				\left\|  Pg - P_1 g  \right\| &\leqslant \left\| (P-P_1)(g-\sum_{i=1}^{N_3}b_ig_i)  \right\|  + \left\| P\sum_{i=1}^{N_3}b_ig_i -P_1\sum_{i=1}^{N_3}b_ig_i  \right\|
				\\&= \left\| P(g-\sum_{i=1}^{N_3}b_ig_i)  \right\|  + \left\| f - \sum_{i=1}^{N_3} a_i f_i \right\|
				\\& \leqslant \left\| P \right\| \tilde{\epsilon}_1 + \left\|  P \right\| (1+\tilde{\epsilon}_1) \tilde{\epsilon}_2
				\\&\leqslant \left\| P \right\| \epsilon_1 + \left\|  P \right\| (1+\epsilon_1) \epsilon_2.
			\end{aligned}
		\end{equation}
		
			Therefore, we can conclude that the channel DoF is no more than ${\rm max}\{N_1,N_2\}$ with error $\left\| P \right\| \epsilon_1 + \left\|  P \right\| (1+\epsilon_1) \epsilon_2$.
	\end{proof}
	
	\begin{remark}
		Through {\bf Theorem 2} we provide the relationship between functional DoF and channel DoF, showing that the channel DoF is upper bounded by the functional DoF at the transceivers. Here it is worth noting that the bound is a weak bound while a tight bound related to ${\rm min}\{N_1,N_2\}$ is expected inspired by classical discrete MIMO theory, which remains for further works.
	\end{remark}
	
	\section{Numerical simulation}
	
	After showing that the DoF analyzed in this paper provides performance upperbound for wireless communication systems, in this part we will provide numerical simulations of the eigenvalues and eigenfunctions based on the theoretical derivations in {\bf Section. \ref{sec_theoretical_dof}}. 
	
	Firstly we simulate the eigenvalues of the Slepian concentration problem for three-dimensional antenna array based on (\ref{eigenfunction_3d}), where the kernel functions are provided in (\ref{equ_ball}), (\ref{equ_spherical_shell}), and (\ref{equ_sphere_surface}). The simulation results show the DoF upperbound when antennas are constrained in three-dimensional space.
	We show the DoF with different frequency bandwidth in Fig. \ref{different_frequency_setting}. The three-dimensional antenna array is $0.5\,{\rm m} \times 0.5\,{\rm m} \times 0.5\,{\rm m}$, sampled with $\lambda/2$ antenna spacing. The shape of the concentration region in wavenumber domain is set to ball, spherical shell, and sphere surface, which correspond to different frequency settings.
	The maximum frequency is $3\,{\rm GHz}$. The ball corresponds to frequency band $[0,3]\,{\rm GHz}$. The thick shell has $2\,{\rm GHz}$ bandwidth. The thin shell has $150\,{\rm MHz}$ bandwidth. The sphere surface in the wavenumber domain corresponds to single-frequency point at $3\,{\rm GHz}$. The figure shows that not all eigenvalues tend to maximum or 0 for broad-band scenario. On the contrary, a large transition band still exist, which means that the asymptotic conclusions of DoF are not accurate enough. For narrow-band or single frequency scenario, the eigenvalues decay very fast from the maximum value, which implies that asymptotic conclusions of DoF do not work.
	It can also be observed that when the frequency bandwidth tends to 0 (thickness of the spherical shell in wavenumber domain tends to 0), the eigenvalues converge to that in the case with single frequency point, which leads to non-zeros DoFs and verifies the theoretical analysis. It also shows that the DoF is not linear with bandwidth in narrow-band scenario.

	\begin{figure}
		\centering 
		\includegraphics[width=0.7\textwidth]{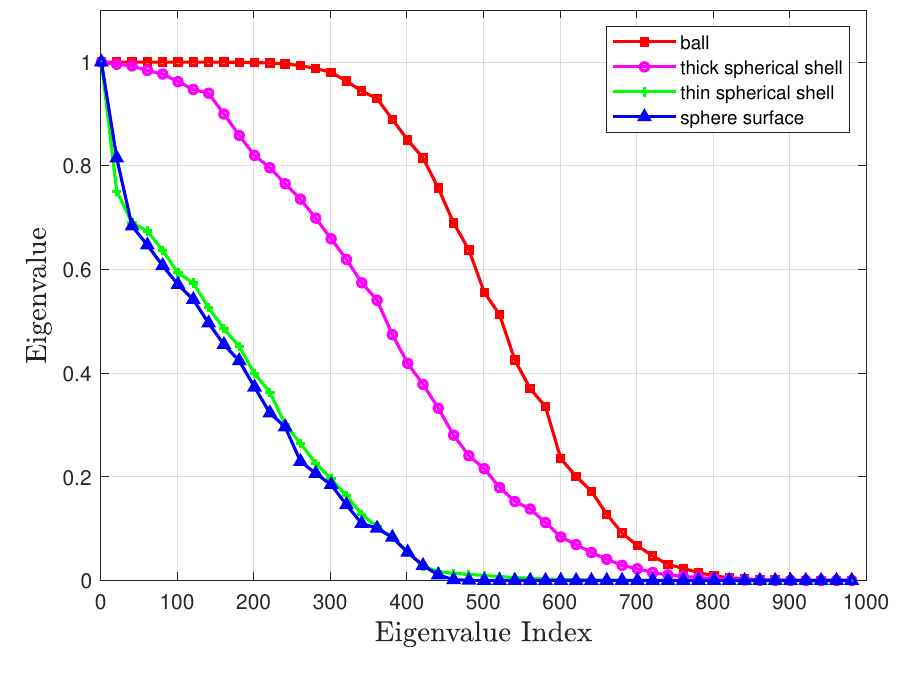} 
		\caption{The DoF gain with different frequency band. The three-dimensional antenna array is $0.5\,{\rm m} \times 0.5\,{\rm m} \times 0.5\,{\rm m}$, sampled with $\lambda/2$ antenna spacing. The maximum frequency is $3\,{\rm GHz}$. The thick shell has $2\,{\rm GHz}$ bandwidth. The thin shell has $150\,{\rm MHz}$ bandwidth. The sphere surface in the wavenumber domain corresponds to single frequency point at $3\,{\rm GHz}$.} 
		\label{different_frequency_setting}
	\end{figure}

	Then, we simulate the DoF gain with different heights in Fig. \ref{different_z_setting}. The three-dimensional antenna array is $0.5\,{\rm m} \times 0.5\,{\rm m} \times 0.05x\,{\rm m}$, sampled with $\lambda/2$ antenna spacing. $x$ is chosen to be $1$, $3$, $7$, and $10$ to represent three-dimensional antenna array with different heights (number of layers). It can be observed from Fig. \ref{different_z_setting} that extra space dimension brings DoF gain for the  electromagnetic fields, which means that by using three-dimensional antenna arrays, more DoFs may be obtained compared to traditional two-dimensional surface. 
	
	\begin{remark}
		The performance gain of three-dimensional antenna array may come from the enhancement of wave number domain resolution capability compared to two-dimensional antenna array. For the three-dimensional concentration problem with wavenumber domain constraint $k_x^2+k_y^2+k_z^2 \leqslant k_0^2$, it is easy to observe that it has the same asymptotic DoFs with the problem which has wavenumber domain constraint $k_x^2+k_y^2 \leqslant k_0^2$ and $|k_z| \leqslant \frac{2}{3}k_0$. The latter problem, which has a cylinder-shaped wavenumber domain constraint, can use separation of variables. To be clearer, it can be decomposed as a two-dimensional concentration problem with variables $k_x$ and $k_y$, which has eigenfunctions $\{\phi_i(k_x,k_y)\}_{i=1}^{+\infty}$, and a one-dimensional concentration problem with variables $k_z$, which has eigenfunctions $\{\phi_i(k_z)\}_{i=1}^{+\infty}$. Then we can observe that the eigenfunctions of the three-dimensional concentration problem, expressed as $\{\phi_i(k_x,k_y)\phi_j(k_z)\}_{i=1,j=1}^{+\infty}$, from a many-to-one mapping with the eigenfunctions of the two-dimensional concentration problem. In another word, it means that several waveforms in three-dimensional case project to the same waveform in two-dimensional case, showing the enhancement of wave number domain resolution capability by using three-dimensional array.
	\end{remark}
	
	\begin{figure}
		\centering 
		\includegraphics[width=0.7\textwidth]{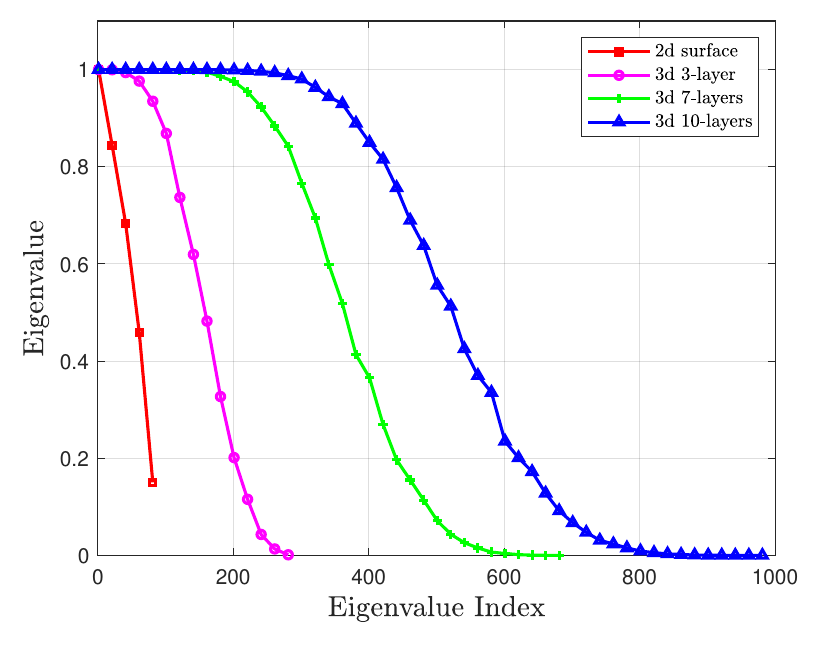} 
		\caption{The DoF gain with different heights. The three-dimensional antenna array is $0.5\,{\rm m} \times 0.5\,{\rm m} \times 0.05x\,{\rm m}$, sampled with $\lambda/2$ antenna spacing. $x$ is chosen to be $1$, $3$, $7$ and $10$ to represent three-dimensional antenna array with different heights (number of layers).} 
		\label{different_z_setting}
	\end{figure}

	Next we discuss the near-optimal sampling scheme of antenna array. Noted that for the two-dimensional antenna array that simply omit the third spatial dimension, \cite{pizzo2022nyquist} shows that the half-wavelength sampling is the optimal scheme. On the one hand, this conclusion is approximate rather than completely accurate even for an ideal two-dimensional array. On the other hand, electromagnetic fields actually exist in three-dimensional space, which implies that simply ignoring one dimension may lead to model mismatch. In the following part we will show the spatial sampling scheme of ideal two-dimensional planar array based on the two-dimensional Slepian concentration problem. Then we will provide simulations based on three-dimensional models to provide more accurate and general conclusions on DoF.
		
		Fig. \ref{different_z_setting_2d} shows the eigenvalues of a two-dimensional Slepian concentration problem, which can be viewed as omitting the third dimension in space and wavenumber domain of electromagnetic field. The space concentration region is a rectangle, the size of which is $6\lambda \times 6\lambda$, with different antenna spacing in the simulation. The maximum frequency is $3\,{\rm GHz}$. It is worth noting that whatever the frequency bandwidth is, the concentration region in the space domain keeps the same, because the two-dimensional projections of solid sphere, spherical shell and sphere surface are all solid circles. 
		 It can be observed from Fig. \ref{different_z_setting_2d} that by applying half-wavelength sampling on the two-dimensional array, nearly all eigenvalues are above zero, which means that half-wavelength sampling is necessary to utilize the DoF of the electromagnetic fields. The number of non-zero eigenvalues of $\lambda/4$ sampling is a little larger than half-wavelength sampling, which introduces extra DoFs. Whether these DoFs are worth using relies on the error threshold in {\bf Definition 1}.
		 
		 \begin{figure}
		 	\centering 
		 	\includegraphics[width=0.7\textwidth]{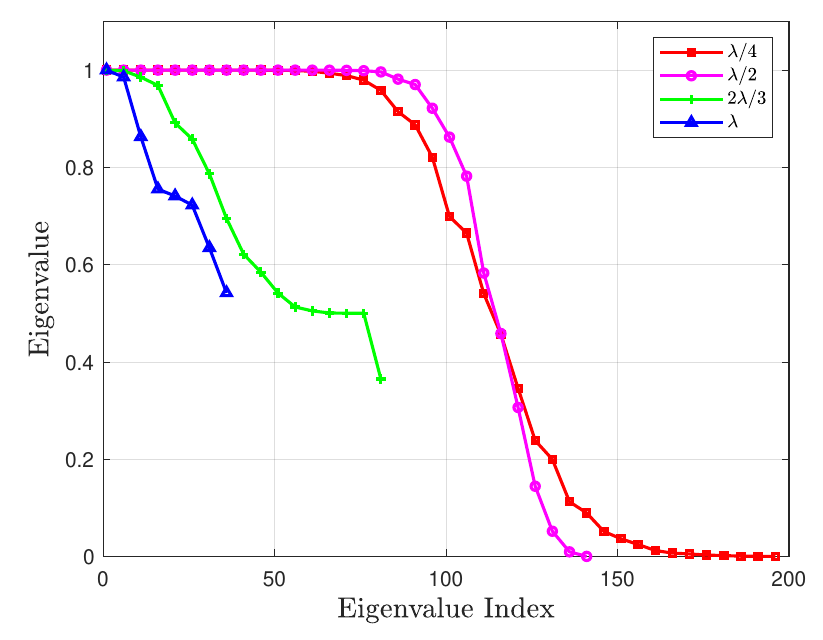} 
		 	\caption{The DoF gain from the ideal two-dimensional Slepian's problem. The antenna array is $6\lambda \times 6\lambda$ with different antenna spacing on $x$ and $y$ axes. The frequency is $3\,{\rm GHz}$.} 
		 	\label{different_z_setting_2d}
		 \end{figure}
		 
		 Figure \ref{different_antenna_spacing_merge} depicts the eigenvalues corresponding to the three-dimensional Slepian concentration problem, where no dimensions of the electromagnetic field's spatial and wavenumber domains are neglected. The size of spatial concentration region is $6\lambda \times 6\lambda \times 2.5\lambda$. To facilitate comparison with the two-dimensional case, the sampling spacing is varied only along the $x$ and $y$ axes, while maintaining half-wavelength sampling along the $z$ axis. It is worth noting that the two-dimensional projection of this three-dimensional array on the $xOy$ plane is consistent with the scenario in Figure \ref{different_z_setting_2d}. According to the conclusions from the previous figure, the characteristics of different sampling spacing on the $x$ and $y$ axes should only be related to the highest frequency and not to the bandwidth, achieving near-optimal results at the half-wavelength sampling. In figure \ref{different_antenna_spacing_merge} we have two different scenarios: broad-band and single-frequency point, both with a highest frequency of $3\,{\rm GHz}$. The broad-band scenario covers $[0,3]\,{\rm GHz}$, corresponding to a solid sphere in the wavenumber concentration region, while the single-frequency point corresponds to a sphere surface in the wavenumber concentration region. It is shown that the characteristics of different sampling spacing differ between the two scenarios. For the broad-band scenario, half-wavelength spacing is necessary, similar to the simulation results of the two-dimensional model. However, for the single-frequency point scenario, about one-third of the eigenvalues tend to zero at half-wavelength spacing, which means that half-wavelength spacing is oversampling. On the contrary, at $2\lambda/3$ spacing, almost all eigenvalues are significantly greater than zero, suggesting that the near-optimal sampling density is around $2\lambda/3$ spacing, which is sparser than half-wavelength sampling. Therefore, for electromagnetic fields in three-dimensional space and wavenumber domain, the near-optimal sampling spacing depends on the structure of the frequency and wavenumber concentration regions, not just the highest frequency. When only considering the two-dimensional problem, the structure of the frequency and wavenumber concentration regions is lost, leading to less accurate results.

\begin{figure}
	\centering 
	\includegraphics[width=0.7\textwidth]{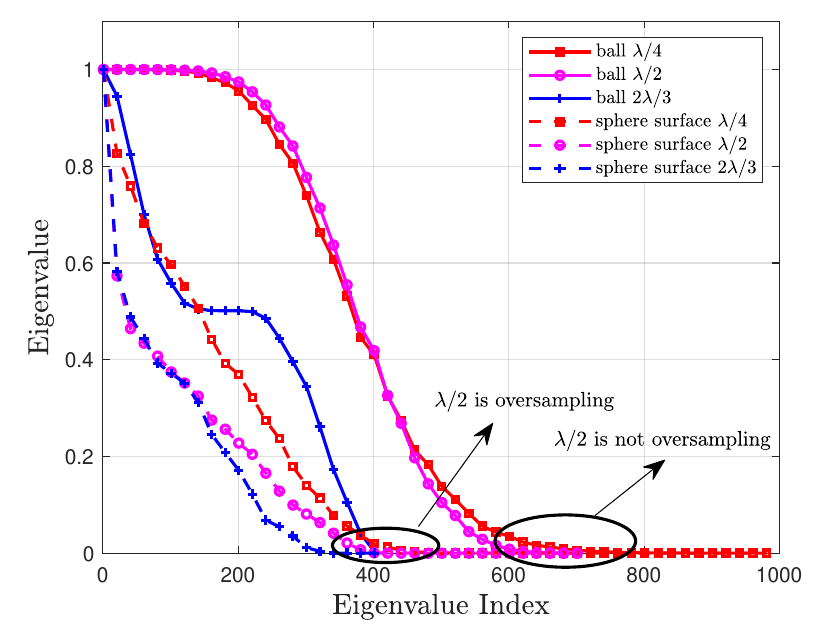} 
	\caption{The DoF gain with different antenna spacing. The 3d-antenna array is $6\lambda \times 6\lambda \times 2.5\lambda$ with different antenna spacing on $x$ and $y$ axes, and $\lambda/2$ antenna spacing on $z$ axis. The frequency is $3\,{\rm GHz}$.} 
	\label{different_antenna_spacing_merge}
\end{figure}

		Moreover, we will discuss the influence of spatial sampling scheme, which corresponds to the spacing between different antennas. In Fig. \ref{different_sampling_scheme_merge} we use both equal antenna spacing and unequal antenna spacing based on Gauss-Legendre quadrature rule. The sampling points and weights of Gauss-Legendre quadrature rule rely on the following Legendre polynomials
		$P_0(x) :=  1$ and
		$P_n(x) := \frac{1}{2^n n!}\frac{{\rm d}^n}{{\rm d}x^n}\left[ (x^2-1)^n \right]$. The sampling points $x_1,\dots,x_n$ are zeros of $P_{n}(x)$, and the weights $A_1,\cdots,A_n$ satisfy
		\begin{equation}
			\begin{aligned}
				A_k = \frac{2}{n} \frac{1}{P_{n-1}(x_k)P^{'}_{n}(x_k)}.
			\end{aligned}
		\end{equation}
		It is obvious that unequal antenna spacing does not change the number of non-zero eigenvalues, which means that the DoF will not be greatly influenced and converge to the same upperbound determined by continuous aperture. However, unequal antenna spacing influences the decay rate of eigenvalues. In our simulation results, Gauss-Legendre sampling scheme corresponds to more large eigenvalues compared to classical equal sampling scheme. This phenomenon may be used when we need to design limited number of electromagnetic patterns to convey information. With Gauss-Legendre sampling scheme we can have a better approximation to expected electromagnetic fields with a fixed number of patterns, and use fewer patterns for given approximation accuracy to expected electromagnetic fields.

\begin{figure}
	\centering 
	\includegraphics[width=0.7\textwidth]{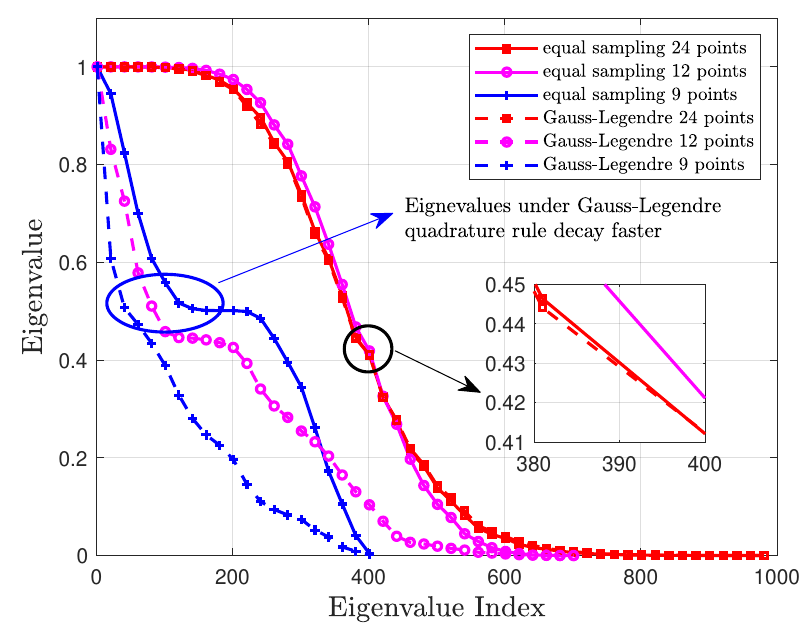} 
	\caption{The DoF gain with different sampling points with equal sampling and Gauss-Legendre quadrature rule. The 3d-antenna array is $6\lambda \times 6\lambda \times 2.5\lambda$ with different sampling points on $x$ and $y$ axes, and $\lambda/2$ antenna spacing on $z$ axis. The frequency band is $[0,3]\,{\rm GHz}$.} 
	\label{different_sampling_scheme_merge}
\end{figure}

Finally we consider both the space and time domains to provide more general results for electromagnetic information theory. Noted that for wireless communication systems the time-harmonic assumption requires the periodicity of the electromagnetic field in the time domain, that is, the theoretical model is strictly satisfied only after the electromagnetic field becomes stationary. The above communication process will greatly reduce the efficiency of wireless communication and is not practical enough. Therefore, without the time-harmonic assumption, we can obtain a more general model for electromagnetic information theory, which better fits practical wireless communication processes. With the theoretical analysis, we plot the space-time patterns inspired by Slepian's concentration problem in Fig. \ref{fig_cdl_fit}. For ease of calculation, we bound the electromagnetic fields on one-dimensional space domain which corresponds to linear antenna array. These patterns have good orthogonality, shown in Fig. \ref{pattern_correlation}. Moreover, according to \eqref{eqn_min_f_approx} we know that these patterns can be used to construct arbitrary required electromagnetic fields in the source/destination regions with a given approximation error.

	\begin{figure}[!t]
	\setlength{\abovecaptionskip}{-0.0cm}
	\setlength{\belowcaptionskip}{-0.0cm}
	\centering
	\subfigcapskip -0.5em
	\subfigure[Pattern 1]{
		\includegraphics[width=3in]{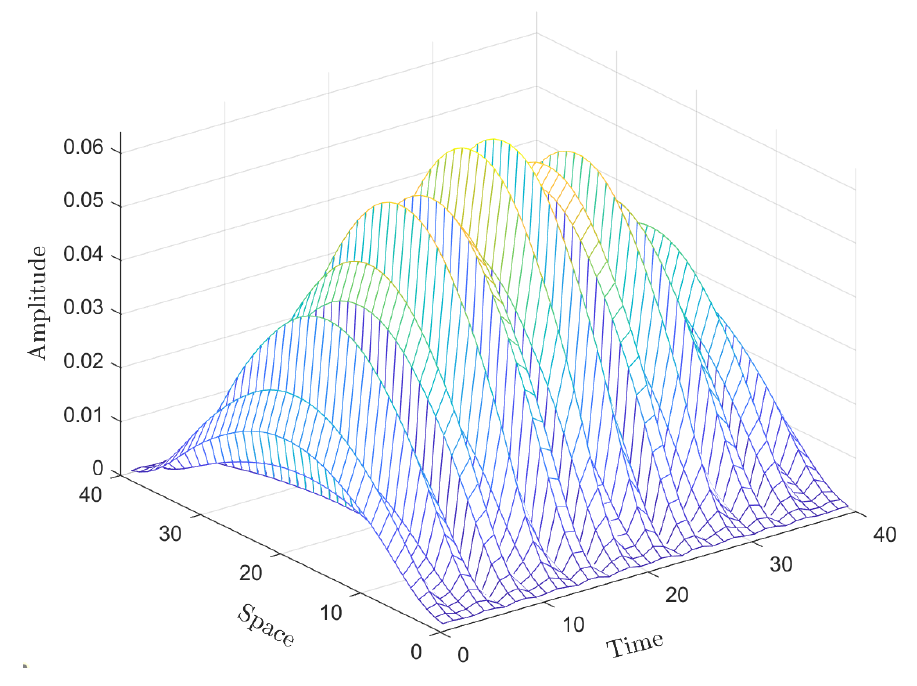}
	}
	\subfigure[Pattern 2]{
		\includegraphics[width=3in]{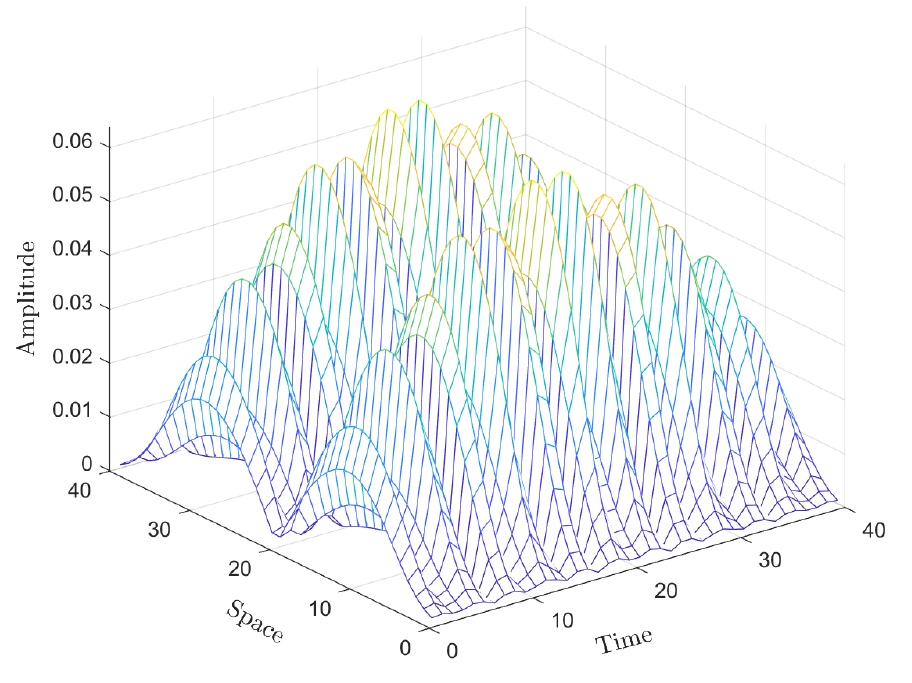}
	}
	
	\subfigure[Pattern 3]{
		\includegraphics[width=3in]{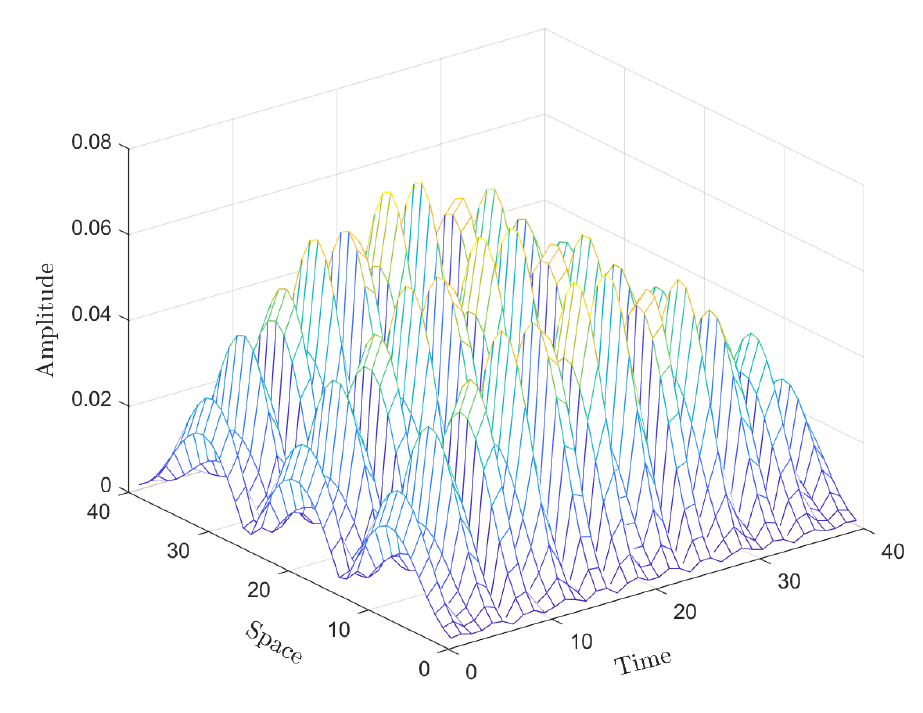}
	}
	\subfigure[Pattern 4]{
		\includegraphics[width=3in]{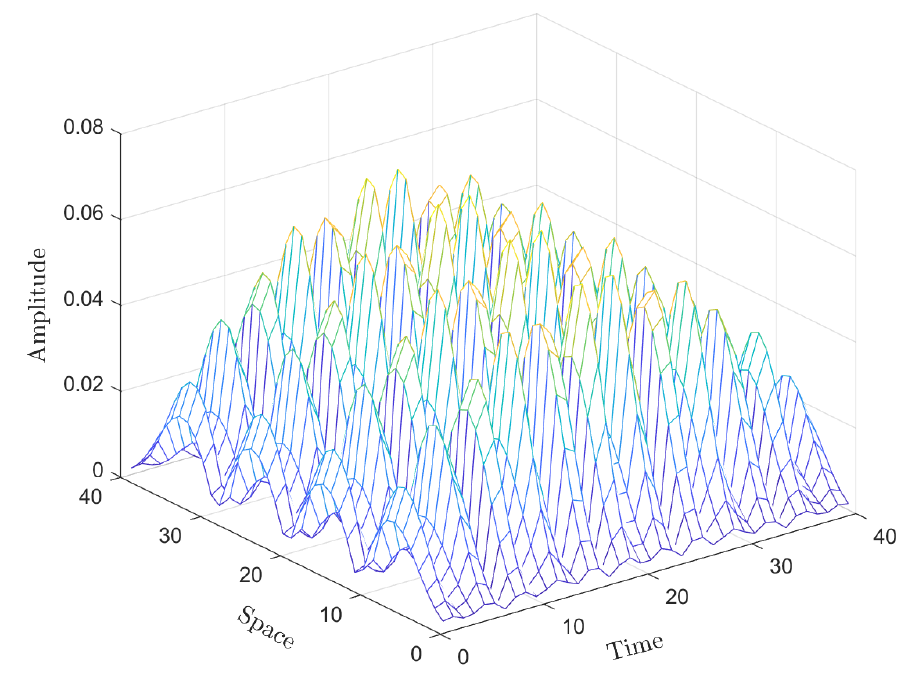}
	}
	
	\subfigure[Pattern 5]{
		\includegraphics[width=3in]{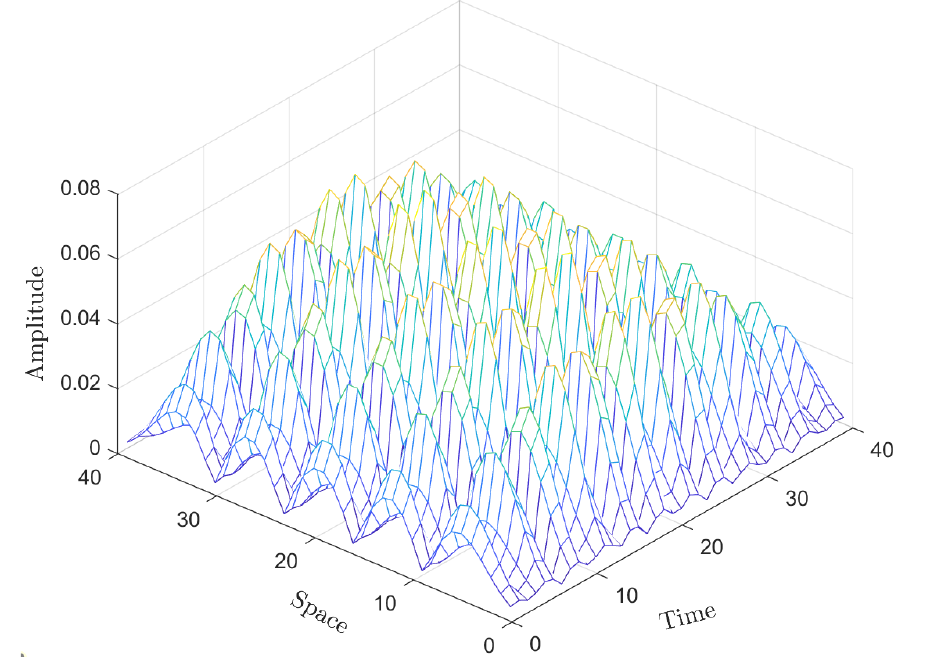}
	}
	\subfigure[Pattern 6]{
		\includegraphics[width=3in]{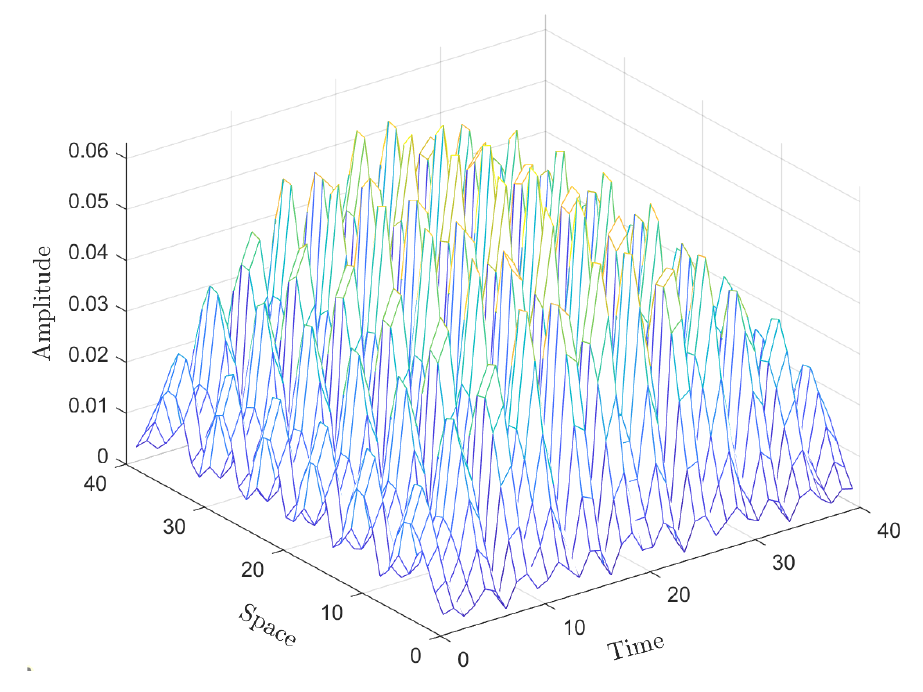}
	}
	
	\caption{
		The first 6 orthogonal space-time patterns that satisfy the electromagnetic constraints and concentrate in the observation region. 
	}
	\label{fig_cdl_fit}
\end{figure}

\begin{figure}
	\centering 
	\includegraphics[width=0.7\textwidth]{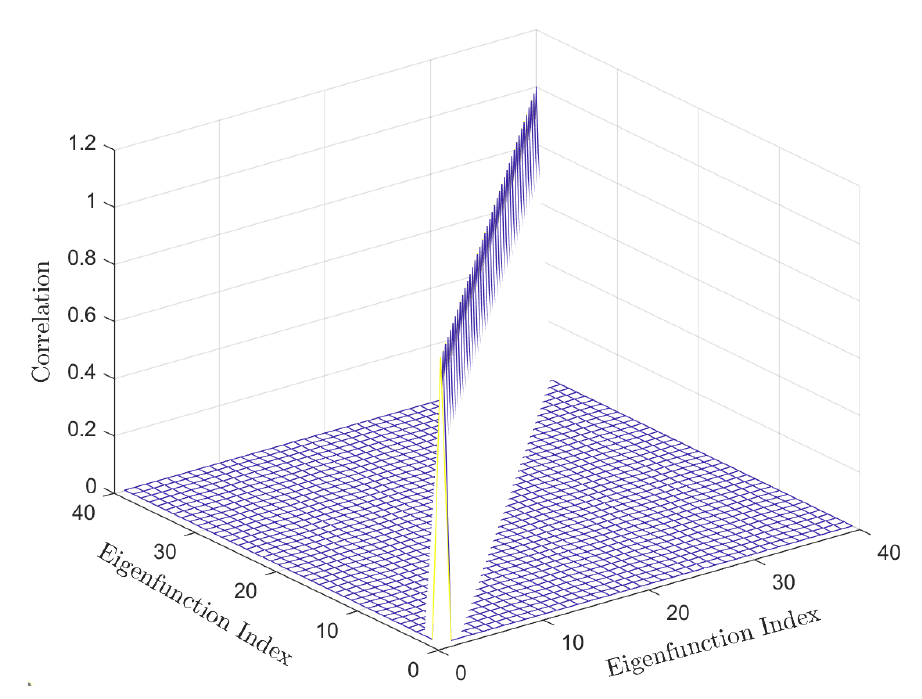} 
	\caption{The correlation of the space-time patterns.} 
	\label{pattern_correlation}
\end{figure}

\section{Conclusion}
In this paper, we provide a general DoF and pattern analyzing scheme for EIT. Specifically, we begin from the classical Slepian concentration problem which analyzes the model with limited time observation region and frequency bandwidth. Then we extend it to three-dimensional space domain and four-dimensional space-time domain to obtain the DoF of the electromagnetic fields with given physical constraints. We provide mathematical definitions of functional DoF and channel DoF and theoretically discuss the relationship between them. Finally we provide numerical simulations to verify the theoretical analysis and show more insights, including the near-optimal spatial sampling interval of antenna array, the DoF of three-dimensional antenna array, the impact of unequal antenna spacing, the orthogonal space-time patterns, etc. For example, our simulation results show that with narrow frequency bandwidth, half-wavelength antenna spacing may not be necessary to achieve the DoF upperbound.

Further research work can be focused on deriving analytical expressions of eigenvalues and eigenfunctions instead of numerical simulations in some specific cases. Moreover, a tighter bound between functional DoF and channel DoF is expected.

\section*{Acknowledgment}

The authors sincerely thank Prof. Lizhong Zheng and Prof. Husheng Li for their valuable comments on this work.

	\footnotesize
	
	\bibliographystyle{IEEEtran}
	
	\bibliography{bib}
\end{document}